\DeclareMathOperator {\Var}  {Var}
\def\compactify{\itemsep=0pt \topsep=0pt \partopsep=0pt \parsep=0pt}
\newcommand {\symdiff} {\bigtriangleup}
\newcommand {\roundup}   [1] {{\lceil {#1} \rceil}}
\newcommand {\rounddown} [1] {{\lfloor {#1} \rfloor}}
\newcommand {\set}   [1] {\left\{ #1 \right\}}
\newcommand {\eps}       {\varepsilon}
\newcommand {\E}       {\mathbb{E}}
\newcommand {\bbZ}    {\mathbb{Z}}
\newcommand {\bbR}    {\mathbb{R}}
\newcommand {\bbN}    {\mathbb{N}}
\newcommand {\calN}   {{\cal{N}}}
\newcommand {\calI}   {{\cal{I}}}
\newtheorem{theorem}{Theorem}[section]
\newtheorem{lemma}[theorem]{Lemma}
\newtheorem{claim}[theorem]{Claim}
\newtheorem{corollary}[theorem]{Corollary}
\newtheorem{definition}[theorem]{Definition}
\newtheorem{remark}[theorem]{Remark}
\newcommand{\ThmStyle}[2]{\noindent\textbf{Theorem~\ref{#1}.} \textit{#2}}
\newcommand{\LemStyle}[2]{\noindent\textbf{Lemma~\ref{#1}.} \textit{#2}}
\newcommand{\DeclareThmA}[2]{
   \begin{theorem}\label{thm:#1}
   #2
   \end{theorem}
   \newcommand{\ThmA}{\ThmStyle{thm:#1}{#2}}
    }
\newcommand{\DeclareLemA}[2]{
   \begin{lemma}\label{lem:#1}
   #2
   \end{lemma}
   \newcommand{\LemA}{\LemStyle{lem:#1}{#2}}
    }
\newcommand{\DeclareLemB}[2]{
   \begin{lemma}\label{lem:#1}
   #2
   \end{lemma}
   \newcommand{\LemB}{\LemStyle{lem:#1}{#2}}
    }
\title{How to Play Unique Games against a Semi-Random Adversary\\ \small{$\phantom{.}$} \\ \large{Study of Semi-Random Models of Unique Games}}
\author{Alexandra Kolla\\Microsoft Research \and Konstantin Makarychev\\IBM Research \and
Yury Makarychev\\TTIC}
\date{}
\begin{document}
\maketitle
\begin{abstract}
In this paper, we study the average case complexity of the Unique Games problem.
We propose a natural semi-random model, in which a unique game instance is generated in several steps.
First an adversary selects a completely satisfiable instance of Unique Games, then
she chooses an $\eps$--fraction of all edges, and finally replaces (``corrupts'') the constraints corresponding
to these edges with new constraints. If all steps are adversarial, the adversary can obtain
any $(1-\eps)$ satisfiable instance, so then the problem is as hard as in the worst case.
In our semi-random model, one of the steps is random, and all other steps are adversarial.
We show that known algorithms for unique games (in particular, all algorithms that use the standard SDP relaxation)
fail to solve  semi-random instances of Unique Games.

We present an algorithm that with high probability finds a solution satisfying a $(1-\delta)$ fraction 
of all constraints in semi-random instances (we require that the average degree of the graph is $\tilde \Omega(\log k)$). 
To this end, we consider a new non-standard SDP program for Unique Games, which is not a relaxation for the problem,
and show how to analyze it. We present a new rounding scheme that simultaneously 
uses SDP and LP solutions, which we believe is of independent interest.

Our result holds only for $\eps$ less than some absolute constant. We prove that if $\eps \geq 1/2$, then the problem is hard in one of the models,
that is, no polynomial--time algorithm can distinguish between the following two cases: (i) the instance is a $(1-\eps)$ satisfiable
semi--random instance and (ii) the instance is at most $\delta$ satisfiable (for every $\delta > 0$);
the result assumes the $2$--to--$2$ conjecture.

Finally, we study semi-random instances of Unique Games that are \textit{at most} $(1-\eps)$ satisfiable.
We present an algorithm that with high probability, distinguishes between the case when the instance is a semi-random instance
and the case when the instance is an (arbitrary) $(1-\delta)$ satisfiable instance if $\eps > c\delta$ (for some absolute constant $c$).
\end{abstract}
\setcounter{page}{0} 
\thispagestyle{empty} 
\pagebreak


\section{Introduction}
In this paper, we study the average case complexity of the Unique Games problem in a semi-random model.
In the Unique Games problem, we are given a graph $G=(V,E)$ (denote $n=|V|$), a set of labels $[k]=\{0,\dots, k-1\}$ and a set
of permutations $\pi_{uv}$ on $[k]$, one permutation for every edge $(u,v)$. Our goal is to assign a label (or state)
$x_u\in [k]$ to every vertex $u$ so as to maximize the number of satisfied constraints of the form $x_v = \pi_{uv}(x_u)$.
The value of the solution is the number of satisfied constraints.

The problem is conjectured to be very hard in the worst case. The Unique Games Conjecture (UGC) of Khot~\cite{Kho02}
states that for every positive $\eps$, $\delta$ and sufficiently large $k$, it is NP-hard to distinguish between
the case where at least a $1-\eps$ fraction of constraints is satisfiable, and the case where at most a $\delta$
fraction of all constraints is satisfiable.

 One reason which makes UGC particularly intriguing is its numerous implications. The conjecture, if true, implies that the currently best known approximation algorithms for
many important computational problems have optimal approximation ratios.
Indeed, since its origin, UGC has been successfully used to prove
often optimal hardness of approximation results for several
important NP-hard problems such as MAX CUT~\cite{KKMO05}, Vertex Cover~\cite{KR03},
Maximum Acyclic Subgraph~\cite{GMR}, Max $k$-CSP~~\cite{Rag, GR08, ST06}, which are not known to follow from standard complexity
assumptions.

 Arguably, a seemingly strong reason
for belief in UGC is the failure of several attempts to design efficient algorithms for Unique Games using current state-of-the-art techniques, even though a large amount of research activity in recent years has focused on the design of such algorithms. One direction of research has concentrated on
developing polynomial-time approximation algorithms for arbitrary instances of unique games. The first algorithm
was presented by Khot in his original paper on the Unique Games Conjecture~\cite{Kho02}, and then several algorithms
were developed in papers by Trevisan~\cite{Tre05}, Gupta and Talwar~\cite{GT}, Charikar, Makarychev and Makarychev~\cite{CMM1},
Chlamtac, Makarychev, and Makarychev~\cite{CMM2}. Another direction of research has been to study subexponential approximation
algorithms for Unique Games. The work was initiated by Kolla~\cite{Kolla} and Arora, Impagliazzo, Matthews and Steurer~\cite{AIMS}
who proposed subexponential algorithms for certain families of graphs.
Then, in a recent paper, Arora, Barak and Steurer~\cite{ABS} gave a subexponential algorithm for arbitrary instances of
Unique Games.

These papers, however, do not disprove the Unique Games Conjecture. Moreover, Khot and Vishnoi~\cite{KV} showed that it is impossible to
disprove the Conjecture by using the standard semidefinite programming relaxation for Unique Games, the technique used in the best
currently known polynomial-time approximation algorithms for general instances of Unique Games. Additionally,
Khot, Kindler, Mossel, and O'Donnell~\cite{KKMO05} proved that the approximation guarantees obtained in~\cite{CMM1}
cannot be improved if UGC is true (except possibly for lower order terms).

All that suggests that Unique Games is a very hard problem. Unlike many other problems, however, we do not know any specific
families of hard instances of Unique Games. In contrast, we do know many specific hard instances of other problems.
Many such instances come from cryptography; for example, it is hard to invert a one-way function $f$ on a random input,
it is hard to factor the product $z =xy$ of two large prime numbers $x$ and $y$. Consequently, it is hard to satisfy SAT formulas that encode statements ``$f(x) = y$'' and ``$xy = z$''. There are even more natural families of hard instances of optimization problems; e.g.
\begin{itemize}
\item\textbf{3-SAT:} Feige's 3-SAT Conjecture~\cite{Feige} states that no randomized polynomial time algorithm can distinguish random
instances of 3-SAT (with a certain clause to variable ratio) from $1-\eps$ satisfiable instances of 3-SAT (with non-negligible probability).
\item\textbf{Linear Equations in $\bbZ/2\bbZ$:} Alekhnovich's Conjecture~\cite{Alekhnovich} implies that given a random $(1-\eps)$ satisfiable instance of a system
of linear equations in $\bbZ/2\bbZ$, no randomized polynomial time algorithm can find a solution that satisfies a $1/2 + \delta$ fraction of
equations (for certain values of parameters $\eps$ and $\delta$).
\item\textbf{Maximum Clique Problem:} It is widely believed~\cite{Jerrum} that no randomized polynomial time algorithm can find a clique
of size $(1 + \eps) \log_2 n$ in a $G(n, 1/2)$ graph with a planted clique of size $m = n^{1/2-\delta}$ (for every constant $\eps, \delta > 0$).
\end{itemize}

No such results are known or conjectured for Unique Games. In order to better understand
Unique Games, we need to identify, which instances of the problem are easy and
which are potentially hard. That motivated the study of specific families of Unique Games.
Barak, Hardt, Haviv, Rao, Regev and Steurer~\cite{BHHRRS} showed that unique game instances obtained
by parallel repetition are ``easy'' (we say that a family of $1 - \eps$ satisfiable instances is \textit{easy} if there is a randomized polynomial-time algorithm that satisfies a constant fraction of constraints) . Arora, Khot, Kolla, Steurer, Tulsiani, and Vishnoi~\cite{AKK} showed
that unique games on spectral expanders are easy (see also Makarychev and Makarychev~\cite{MM}, and
Arora, Impagliazzo, Matthews and Steurer~\cite{AIMS}).

In this paper, we investigate the hardness of semi-random (semi-adversarial) instances
of Unique Games.
In a semi-random model, an instance is generated in several steps;
at each step, choices are either made adversarially or randomly.
Semi-random models were introduced by Blum and Spencer~\cite{BS} (who considered semi-random instances
of the $k$-coloring problem) and then studied by Feige and Kilian~\cite{FKil}, and Feige and Krauthgamer~\cite{FKra}.

In this paper, we propose and study a model, in which a $1-\eps$ satisfiable unique game instance is generated as follows:
\begin{enumerate}
\item\textbf{Graph Selection Step.} Choose the constraint graph $G = (V,E)$ with $n$ vertices and $m$ edges.
\item\textbf{Initial Instance Selection Step.} Choose a set of constraints $\{\pi_{uv}\}_{(u,v)\in E}$ so that the obtained instance is completely satisfiable.
\item\textbf{Edge Selection Step.} Choose a set of edges $E_{\eps}$ of size $\eps m = \eps|E|$.
\item\textbf{Edge Corruption Step.} Replace the constraint for every edge in $E_{\eps}$ with a new constraint.
\end{enumerate}
Note that if an adversary performs all four steps, she can obtain an arbitrary $1-\eps$ satisfiable instance, so, in
this fully--adversarial case, the problem is as hard as in the worst case.
The four most challenging semi-random cases are when choices at one out of the four steps are made randomly,
and all other choices are made adversarially. The first case --- when the graph $G$ is random and, in particular, is an expander ---
was studied by Arora, Khot, Kolla, Steurer, Tulsiani, and Vishnoi~\cite{AKK}, who showed that this case is easy.

We present algorithms for the other three cases that with high probability (w.h.p.) satisfy a $1-\delta$ fraction of constraints
(if the average degree of $G$ is at least $\tilde \Omega(\log k)$ and $\eps$ is less than some absolute constant).
\begin{theorem}
For every $k\geq k_0$, $\eps_0 > 0$ and $\delta > C\max(\eps_0,\log k/\sqrt{k})$ (where $C$ and $k_0$ are absolute constants), there exists a randomized polynomial
time algorithm that given a semi-random instance of Unique Games with $\eps = \eps_0$ (generated in one of the
three models; see~Section~\ref{sec:models} for details) on a graph $G$ with average degree at least $\tilde\Omega(\log k) \delta^{-3}$,
finds a solution of value at least $1 - \delta$ with probability
\footnote{The probability is over both the random choices that we make when
we generate the semi-random instance, and the random choices that the algorithm does. That is, the probability that the model generates
a semi-random instance $\cal I$, such that the algorithm finds a solution of $\cal I$ of value at least $1-\delta$ with probability $1-o(1)$,
is $1-o(1)$.}
$1 - o(1)$.
\end{theorem}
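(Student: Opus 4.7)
I would reduce the three semi-random models to a common \emph{planted-recovery} task and handle them with a unified SDP-plus-LP algorithm. Without loss of generality (by relabeling each vertex), assume the planted assignment from the initial instance selection step is the all-zero assignment $x^{*}_u \equiv 0$, so that every uncorrupted edge carries the identity constraint $x_u = x_v$ and every corrupted edge carries an arbitrary (or random) permutation. Because $\delta > C\max(\eps_0, \log k/\sqrt{k})$, it suffices to output an assignment that agrees with $x^{*}$ on all but a $\delta/2$ fraction of vertices; then at least a $(1-\delta)$ fraction of all $m$ constraints are satisfied.

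\textbf{Algorithmic core.} The paper signals that the standard Unique Games SDP fails on these instances, so the algorithm should be built around a non-standard SDP that need not be a relaxation, but is crafted so that (i) the planted solution is feasible with a predictable objective, and (ii) every feasible SDP solution of comparable value is quantitatively close to the planted solution with high probability in each model. Rounding is then done with help from an auxiliary LP: the LP produces a short candidate \emph{list} $L_u$ of plausible labels at each vertex $u$ (by enforcing local permutation-consistency along edges), and the SDP geometry is used to pick the correct element of $L_u$ via an aggregation over $u$'s neighbors. The degree hypothesis $\tilde\Omega(\log k)\delta^{-3}$ is exactly the threshold at which this aggregation concentrates, by a Chernoff bound, simultaneously over all $n$ vertices with error below $\delta$.

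\textbf{Per-model analysis.} The randomness of each model enters differently into property (ii) above. In the random initial-instance model, uncorrupted constraints come from a random completely-satisfiable instance and the associated signed adjacency operator is a spectral expander, so standard matrix concentration forces SDP solutions to align with the planted eigenspace. In the random edge-selection model, the adversary fixes a fully satisfiable instance and then a random $\eps_0 m$ subset of edges is corrupted; here I would argue that a random such subset cannot, with high probability, carve out a subinstance structured enough to support a competing assignment. In the random edge-corruption model, the adversary chooses which $\eps_0 m$ edges to corrupt, but the replacement permutations are random; then each corrupted edge contributes zero expected signal to the SDP's quadratic form, and the $(1-\eps_0)$-fraction of uncorrupted edges dominates.

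\textbf{Main obstacle.} The most delicate case is the random edge-corruption model, because the adversary may cluster the corrupted edges into a small dense subgraph and the only randomness is in the replacement permutations. Showing that such local adversarial clustering cannot fool the global SDP requires a matrix-concentration argument for a signed operator built from the random permutations, decoupled carefully from the adversary's arbitrary choice of edge set. Taking a union bound over the exponentially many possible adversarial edge sets is where I expect the technical work to be hardest; a net argument combined with a matrix Bernstein inequality seems like the right tool.
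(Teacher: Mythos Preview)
Your plan diverges substantially from the paper's and, more importantly, contains structural misdiagnoses that would prevent it from going through.

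\medskip
\textbf{The paper does not unify; the models need different ideas.} The theorem in the paper is proved by invoking three separate results (Theorems~3.1, 4.1, 6.1), one per model, with genuinely different algorithms and analyses. Your unified ``SDP-plus-LP with aggregation'' scheme is not what the paper does, and you have not shown it handles the hardest case.

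\medskip
\textbf{You invert the difficulty ordering.} You identify the \emph{adversarial-edges, random-constraints} model as the delicate one. In the paper this is the \emph{easier} of the two main cases: solve the \emph{standard} SDP, prove (Theorem~4.2) that w.h.p.\ almost every corrupted edge is $1/16$-long in \emph{every} SDP solution, delete all long edges, and run the CMMa rounding on what remains. The case that actually forces the non-standard SDP is \emph{random-edges, adversarial-constraints}: there the adversary controls the corrupted permutations, the standard SDP can be fooled (the Khot--Vishnoi example in the introduction), and spectral/matrix-concentration arguments on the signed operator do not apply because there is no randomness in the constraints at all. The paper's fix is the ``Crude SDP'' with all vectors forced to unit length, together with the key comparison trick: replace the zero-layer of the optimal C-SDP by a single fixed vector and use optimality to force $\sum_{(u,v)} p(Z_{uv},\|u_0^*-v_0^*\|^2)\le 0$, which w.h.p.\ (Lemma~3.4) is impossible unless the zero-layer distances are tiny. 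Your sketch for this model (``a random subset cannot carve out a structured subinstance'') does not suggest this mechanism.

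\medskip
\textbf{The union bound is over SDP solutions, not over adversarial edge sets.} In every model the adversary moves \emph{before} the randomness, so one fixes the adversarial choices and argues w.h.p.\ over nature; no union bound over edge sets is needed or taken. The exponentially large family one must control is the set of possible SDP solutions, and the paper handles this by Johnson--Lindenstrauss dimension reduction to a net of size $\exp(O(n\,\mathrm{poly}\log))$ (Lemma~3.5 for the C-SDP case; the bucketing-by-norm plus JL argument in the proof of Theorem~4.2 for the other case). Your proposed ``union bound over adversarial edge sets with matrix Bernstein'' is aimed at the wrong object.

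\medskip
\textbf{Rounding.} The LP in the paper is not a list-decoder; it assigns fractional weights $x(u,i)$ subject to $\sum_i x(u,i)=1$, with objective $\sum_{\text{super-short }((u,i),(v,j))}\min(x(u,i),x(v,j))$, and is rounded \emph{jointly} with the C-SDP via an ``LP-weighted orthogonal separator'' (Lemma~3.6). There is no per-vertex aggregation over neighbors and no Chernoff step of the kind you describe; the degree lower bound enters only through the net/union-bound calculation, not through a rounding concentration argument.
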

The theorem follows from Theorems~\ref{thm:main1}, \ref{thm:main2}, and~\ref{thm:main3}, in which we analyze
each model separately, and establish more precise bounds on the parameters for each model.

In our opinion, this is a very surprising result since the adversary has a lot of control over the semi-random instance. Moreover, our results suggest that the Unique Games problem is different in nature than several NP-hard problems like SAT, which are thought to be hard on average.

We want to point out that previously known approximation algorithms for Unique Games cannot find good solutions of semi-random
instances. Also techniques developed for analyzing semi-random instances of other problems such as local analysis,
statistical analysis, spectral gap methods, standard semidefinite programming techniques seem to be inadequate to deal
with semi-random instances of Unique Games.
To illustrate this point, consider the following example.
Suppose that the set of corrupted edges is chosen at random, and all other steps are adversarial (``the random edges, adversarial constraints case'').
The adversary generates a semi-random instance as follows.
It first prepares two instances ${\cal I}_1$ and ${\cal I}_2$ of Unique Games.
The first instance ${\cal I}_1$ is the Khot--Vishnoi instance~\cite{KV} on a graph $G$ with the label set $[k] = \{0,\dots, k-1\}$ and
permutations $\{\pi^1_{uv}\}$  whose SDP value is $\eps' < \eps/2$ but which is only $k^{-\Omega(\eps')}$
satisfiable. The second instance ${\cal I}_2$ is a completely satisfiable instance on the same graph $G$ with the label set $\{k,\dots, 2k-1\}$
and permutations $\pi_{uv}^2 = \mathrm{id}$. She combines these instances together:
the combined instance is an instance on the graph $G$ with the label set $[2k] = \{0,\dots, 2k-1\}$, and permutations
$\{\pi_{uv}: \pi_{uv}(i) = \pi_{uv}^1(i) \text{ if } i\in[k], \text{ and } \pi_{uv}(i) = \pi_{uv}^2(i), \text{ otherwise}\}$.
Once the adversary is given a random set of edges $E_{\eps}$, she randomly changes (``corrupts'') permutations $\{\pi_{uv}^2\}_{(u,v)\in E_{\eps}}$
but does not change $\pi_{uv}^1$, and then updates permutations $\{\pi_{uv}\}_{(u,v)\in E_{\eps}}$ accordingly.
It turns out that the SDP value of ${\cal I}_2$ with corrupted edges is very close to $\eps$, and therefore,
it is larger than $\eps'$, the SDP value of ${\cal I}_1$ (if we choose parameters properly).
So in this case the SDP solution assigns positive weight only to the labels in $[k]$ from the first instance.
That means that the SDP solution does not reveal any information about the optimal solution
(the only integral solution we can obtain from the SDP solution has value $k^{-\Omega(\eps)}$).
Similarly, algorithms that analyze the spectral gap of the label extended graph cannot deal with this instance.
Of course, in this example, we let our first instance, ${\cal I}_1$, to be the Khot--Vishnoi instance
because it ``cheats'' SDP based algorithms. Similarly, we can take as ${\cal I}_1$ another instance
that cheats another type of algorithms. For instance, if UGC is true, we can let ${\cal I}_1$ to be a $1-\eps'$ satisfiable unique
game that is indistinguishable in polynomial-time from a $\delta$--satisfiable unique game.

Our algorithms work only for values of $\eps$ less than some absolute constants. We show that this
restriction is essential. For every $\eps\geq 1/2$ and $\delta > 0$, we prove that no polynomial time
algorithm satisfies a $\delta$ fraction of constraints in the ``adversarial constraints, random edges'' model (only the third step is random), assuming the 2--to--2 conjecture.

One particularly interesting family of semi-random unique games (captured by our model) are
\textit{mixed instances}. In this model, the adversary prepares a satisfiable instance,
and then chooses a $\delta$ fraction of edges and replaces them with adversarial constraints (corrupted constraints);
i.e. she performs all four steps in our model and can obtain an arbitrary $1-\delta$ satisfiable instance.
Then the ``nature'' replaces every corrupted constraint with the original constraint with probability $1 - \eps$.
In our model, this case corresponds to an adversary who at first prepares a list of corrupted constraints $\pi'_{uv}$, and then
at the fourth step replaces constraints for edges in $E_{\eps}$ with constraints $\pi'_{uv}$
(if an edge from $E_{\eps}$ is not in the list, the adversary does not modify the corresponding constraint).

\textbf{Distinguishing Semi-Random At Most $(1-\eps)$ Satisfiable Instances From Almost Satisfiable Instances.}
We also study whether semi-random instances of Unique Games that are at most $(1-\eps)$ satisfiable can be distinguished
from almost satisfiable instances of Unique Games.
This question was studied for other problems. In particular, Feige's ``Random 3-SAT Conjecture''
states that it is impossible to distinguish between random instances of 3-SAT (with high enough clause density)
and $1-\delta$ satisfiable instances of $3$-SAT.
In contrast, we show that in the ``adversarial edges, random constraints'' case (the fourth step is random),
semi-random $(1-\eps)$-satisfiable instances can be efficiently distinguished from (arbitrary)
$(1-\delta)$-satisfiable instances when $\eps > c\delta$ (for some absolute constant $c$).
(This problem, however, is meaningless in the other two cases --- when the adversary corrupts the constraints ---
since then she can make the instance almost satisfiable even if $\eps$ is large.)

\textbf{Linear Unique Games}
We separately consider the case of Linear Unique Games (MAX $\Gamma$-LIN).
In the semi-random model for Linear Unique Games, we require that constraints chosen at the second and fourth steps are of
the form $x_u - x_v = s_{uv} (\mathrm{mod}\ k)$.
Note that in the ``random edges, adversarial constraints'' model, the condition that constraints are of the
form $x_u - x_v = s_{uv} (\mathrm{mod}\ k)$ only restricts the adversary (and does not change how the random edges are selected).
Therefore, our algorithms for semi-random general instances still applies to this case.
However, in the ``adversarial edges, random constraints'' case, we need to sample constraints from a different
distribution of permutations at the fourth step: for every edge $(u,v)$ we now choose a random shift permutation $x_v = x_u -s_{uv}$,
where $s_{uv} \in_U \bbZ/k\bbZ$. We show that our algorithm still works in this case; the analysis however is different.
We believe that it is of independent interest.
We do not consider the case where only the initial satisfying assignment is chosen at random, since for Linear Unique Games,
the initial assignment uniquely determines the constraints between edges (specifically, $s_{uv} = x_u - x_v (\mathrm{mod}\ k)$).
Thus the case when only the second step is random is completely adversarial.

It is interesting that systems of linear equations affected by noise with more than two variables per equations are believed to be much harder.
Suppose we have a consistent system of linear equations $Ax = b$ over $\bbZ/2\bbZ$. Then we randomly change an $\eps$ fraction of entries of $b$.
Alekhnovich~\cite{Alekhnovich} conjectured that no polynomial-time algorithm can distinguish the obtained instance from a completely random instance
even if $\eps \approx n^{-c}$, for some constant $c$ (Alekhnovich stated his conjecture both for systems with 3 variables per equation and for systems with an arbitrary number of variables per equation).

Our results can be easily generalized to Unique Games in arbitrary Abelian groups. We omit the details
in the conference version of this paper.

\subsection{Brief Overview of Techniques}
In this paper, we develop new powerful algorithmic techniques for solving semi-random instances of unique games.
We use different algorithms for different models. First, we outline how we solve semi-random unique games in
the ``adversarial constraints, random edges'' model (see Section~\ref{sec:adv-constr} for details).
As we explained above, we cannot use the standard SDP relaxation (or other standard techniques) to solve
semi-random instances in this model. Instead, we consider a very unusual SDP program for the problem,
which we call ``Crude SDP'' (C-SDP). This SDP is not even a relaxation for Unique Games
and its value can be large when the instance is satisfiable. The C-SDP assigns a unit vector $u_i$ to every
vertex $(u,i)$ of the label--extended graph (for a description of the label--extended graph we refer the reader
to Section~\ref{sec:prelim}). We use vectors $u_i$ to define the length of edges of the
label--extended graph: the length of $((u,i), (v,j))$ equals $\|u_i - v_j\|^2$.
Then we find \textit{super short edges} w.r.t. the C-SDP solution, those edges that have length $O(1/\log k)$.
One may expect that there are very few short edges since for a given C-SDP most edges
will be long if  we choose the unique games instance at random.
We prove, however, that for every C-SDP solution $\set{u_i}$, with high probability (over the semi-random instance) either
\begin{enumerate}
\item there are many super short edges w.r.t. $\set{u_i}$ in the satisfiable layer of the semi-random game,
\item or there is another C-SDP solution of value less than the value of the solution $\set{u_i}$.
\end{enumerate}
Here, as we describe later on in section \ref{sec:prelim}, the ``satisfiable layer'' corresponds to the representation of the satisfying assignment in the label--extended graph.

Note that if our instance is completely satisfiable, then in the optimal (integral) C-SDP solution all the edges that correspond to the satisfiable layer have length zero and, therefore are super short.

Our proof shows how to combine the C-SDP solution with an integral solution
so that the C-SDP value goes down unless almost all edges in the satisfiable layer are super short.
We then show that this claim holds with high probability not only for one C-SDP solution but also for all
C-SDP solutions simultaneously. The idea behind this step is to find a family $\cal F$ of representative
C-SDP solutions and then use the union bound. One of the challenges is to choose a very small family $\cal F$,
so that we can prove our result under the assumption that the average degree is only $\tilde\Omega(\log k)$.
The result implies that w.h.p. there are many super short edges w.r.t. the optimal C-SDP solution.

Now given the set of super short edges, we need to find which of them lie in the satisfiable layer.
We write and solve an LP relaxation for Unique Games, whose objective function depends
on the set of super short edges. Then we run a rounding algorithm that rounds the C-SDP and LP solutions
to a combinatorial solution using a variant of the ``orthogonal separators'' technique developed in~\cite{CMM2}.

Our algorithm for the ``adversarial edges,  random constraints'' model  is quite different.
First, we solve the standard SDP relaxation for Unique Games.
Now, however, we cannot claim that many
edges of the label--extended graph are short. We instead find the ``long'' edges of the graph $G$ w.r.t. the SDP solution.
We prove that most corrupted edges are long, and there are at most $O(\eps)$ long edges in total (Theorem~\ref{thm:longedges}).
We remove all long edges and obtain a highly--satisfiable instance. Then we write and solve an SDP for this instance, and 
round the SDP solution using the algorithm of~\cite{CMM1}.

We also present algorithms for two more cases: the case of ``adversarial edges, random constraints'' where the constraints are of the special form MAX-$\Gamma$-LIN and the case of ``random initial constraints''.

In this paper, we develop several new techniques.  In particular, we propose a novel C-SDP program, and then show
how to exploit the optimality of a C-SDP solution in the analysis.  We develop a rounding algorithm that
simultaneously uses SDP and LP solutions. We demonstrate how to bound the number of different SDP solutions
using dimension reduction methods and other tricks. We believe that our techniques are of independent interest and that they
will prove useful for solving semi-random instances of other problems.



\section{Notation and Preliminaries}
\label{sec:prelim}
\subsection{The Label-Extended Graph}
For a given instance of Unique Games on a constraint graph $G=(V,E)$, with alphabet size $k$ and constraints $\{\pi_{uv}\}_{(u,v)\in E}$ we define the {\em Label-Extended} graph $M(V'=V\times[k],E')$ associated with that instance as follows: $M$ has $k$ vertices $B_v=\{v_0,\cdots{},v_{k-1}\}$ for every vertex $v\in V$. We refer to this set of vertices as the block corresponding to $v$. $M$ has a total of $|V|$ blocks, one for each vertex of $G$. Two vertices $u_i, v_j \in V'$ are connected by an edge if $(u,v)\in E$ and $\pi_{uv}(i)=j$. We refer to a set of nodes $L=\{{u^{(z)}}_{i(z)}\}_{z=1}^{|V|}$ as a ``layer'' if $L$ contains exactly one node from each block $B_{u^{(z)}}$. We note that a layer $L$ can be seen as an assignment of labels to each vertex of $G$. If a layer $L$ consists of vertices with the same index $i$, i.e. $L=\{u^{(z)}_i\}_{z=1}^{|V|}$, we will call $L$ the $i$-th layer.

\subsection{Standard SDP for Unique Games}
Our algorithms use the following standard SDP relaxation for Unique Games (see also~\cite{Kho02, KV, CMM1, CMM2}).
\begin{align*}
&\text{min} &&\frac{1}{2|E|}\sum_{(u,v)\in E}\sum_{i\in [k]} \|u_i - v_{\pi_{uv} (i)}\|^2 &&\\
&\text{subject to}&& \sum_{i=1}^{k} \|u_i\|^2 = 1 && \text{for all } u \in V\\
&&&\langle u_i, u_j\rangle = 0 && \text{for all } u \in V,\ i,j \in [k] ( i\neq j)\\
&&&\langle u_i, v_j\rangle \geq 0 && \text{for all } u,v \in V,\  i,j \in [k] ( i\neq j)\\
&&&\|u_i - v_j\|^2\leq \|u_{i} - w_l\|^2 +\|w_{l} - v_j\|^2 && \text{for all } u,v,w \in V,\   i,j,l\in [k].
\end{align*}
In this relaxation, we have a vector variable $u_i$ for every vertex $u$ and label $i$. In the intended solution,
$u_i$ is an indicator variable for the event ``$x_u =i$''. That is, if $x_u = i$ then $u_i = e$; otherwise, and $u_i = 0$;
where $e$ is a fixed unit vector. The objective function measures the fraction of \textit{unsatisfied} constraints:
if the unique game has value $1 - \eps$, then the value of the intended SDP solution equals $\eps$ (and, therefore,
the value of the optimal SDP solution is at most $\eps$).

Given an SDP solution of value $\eps$, the approximation algorithm of Charikar, Makarychev, and Makarychev~\cite{CMM1}
finds a solution of value $1 - O(\sqrt{\eps \log k})$.
We will use this approximation algorithm as a subroutine (we will refer to it as CMMa).
We will also use the following fact.
\begin{lemma}[see e.g. Lemmas A.1 and A.2 in~\cite{CMM2}]\label{lem:gaussian}
Suppose, we are given two random Gaussian variables $\gamma_1$ and $\gamma_2$ with mean $0$ and variance $1$
(not necessarily independent), and a parameter $k\geq 2$. Let $\alpha = 1/(2k^2)$.
Consider a threshold $t$ s.t. $\Pr(\gamma_1 \geq t) = \Pr(\gamma_2 \geq t) = \alpha$. Then
$\Pr(\gamma_1 \geq t \text{ and } \gamma_2 \geq t) \geq \alpha \Bigl(1 - \sqrt{\frac{1}{c^*} \Var(\gamma_1 - \gamma_2)\log k}\Bigr)$
for some absolute constant $c^*$.
\end{lemma}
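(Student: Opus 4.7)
The plan is to introduce $\rho = \E[\gamma_1 \gamma_2]$, so that $\sigma^2 := \Var(\gamma_1 - \gamma_2) = 2 - 2\rho$. The right-hand side of the claimed inequality is non-positive (and so the inequality is trivial) whenever $\sigma^2 \geq c^*/\log k$, so I would focus on the highly correlated regime $\sigma^2 < c^*/\log k$. There, the goal reduces to upper bounding the defect
\[
\Pr(\gamma_1 \geq t) - \Pr(\gamma_1 \geq t,\ \gamma_2 \geq t) \;=\; \Pr(\gamma_1 \geq t,\ \gamma_2 < t)
\]
by $O(\alpha \sigma \sqrt{\log k})$.

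To compute the defect cleanly, I would use the orthogonal decomposition $X := \gamma_1 + \gamma_2$ and $Y := \gamma_1 - \gamma_2$. Jointly Gaussian with $\E[XY] = \Var(\gamma_1) - \Var(\gamma_2) = 0$, these variables are independent, with variances $2+2\rho$ and $\sigma^2$ respectively. A short case analysis on the sign of $X - 2t$ shows that the event $\{\gamma_1 \geq t,\ \gamma_2 < t\}$ equals $\{Y \geq |X-2t|\}$ (modulo a null set), and conditioning on $X$ gives, with $\Phi$ the standard normal CDF,
\[
\Pr(\gamma_1 \geq t,\ \gamma_2 < t) \;=\; \E_X\!\left[\,1 - \Phi\!\left(\tfrac{|X-2t|}{\sigma}\right)\right].
\]

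Next, I would apply the Gaussian tail bound $1 - \Phi(u) \leq \tfrac{1}{2}e^{-u^2/2}$ for $u \geq 0$ and compute the resulting integral in closed form. Completing the square in the exponent $-z^2/2 - (\sqrt{2+2\rho}\,z - 2t)^2/(2\sigma^2)$, the identities $1 - \rho = \sigma^2/2$ and $2+2\rho = 4 - \sigma^2$ produce an exact cancellation and leave
\[
\E_X\!\left[e^{-(X-2t)^2/(2\sigma^2)}\right] \;=\; \tfrac{\sigma}{2}\,e^{-t^2/2} \;=\; \tfrac{\sigma\sqrt{2\pi}}{2}\,\phi(t),
\]
where $\phi$ is the standard normal density. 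Hence $\Pr(\gamma_1 \geq t,\ \gamma_2 < t) \leq \tfrac{\sigma\sqrt{2\pi}}{4}\,\phi(t)$. Combined with the Mills-ratio estimate $\phi(t) \leq \alpha(t + 1/t) = O(\alpha t)$ and the fact that $t = O(\sqrt{\log k})$ (which follows from $\alpha = 1/(2k^2)$), this rearranges to the lemma for an appropriate absolute constant $c^*$ (one can take $c^* = 1/(3\pi)$, say).

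The main pitfall—and the reason for preferring the $(X,Y)$ decomposition over a direct conditioning argument on $\gamma_1$—is the following. If one bounds $\Pr(\gamma_2 \geq t \mid \gamma_1 = x)$ from below by its value at the smallest allowed $x = t$ and then integrates, the resulting estimate tends to $\alpha/2$ rather than $\alpha$ in the limit $\sigma \to 0$, since the conditional probability at $x = t$ is only about $1/2$. One is thus off by a multiplicative factor of $2$ even in the perfect-correlation limit where $\gamma_1 = \gamma_2$ forces the true value to be exactly $\alpha$. The orthogonal $(X,Y)$-decomposition symmetrizes the problem and, via the exact cancellation in the Gaussian integral above, delivers the correct leading coefficient $\alpha(1 - O(\sqrt{\sigma^2 \log k}))$.
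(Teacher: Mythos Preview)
Your argument is correct. The paper does not actually supply its own proof of this lemma; it is stated as a quoted fact from Lemmas~A.1 and~A.2 of~\cite{CMM2}, so there is nothing in the present paper to compare against directly. The argument you outline---introducing the independent pair $X=\gamma_1+\gamma_2$, $Y=\gamma_1-\gamma_2$, rewriting $\{\gamma_1\ge t,\ \gamma_2<t\}$ as $\{Y\ge |X-2t|\}$, bounding the conditional tail by $\tfrac12 e^{-u^2/2}$, and then evaluating the Gaussian integral in closed form via the identity $(2+2\rho)+(2-2\rho)=4$---is clean and yields the correct leading term $\alpha(1-O(\sigma\sqrt{\log k}))$. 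Your observation about why conditioning directly on $\gamma_1$ loses a factor of $2$ is on point.

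One small remark: you are implicitly using that $\gamma_1$ and $\gamma_2$ are \emph{jointly} Gaussian (so that $X$ and $Y$ are independent once they are uncorrelated). The lemma statement only says ``two random Gaussian variables,'' but in the paper $\gamma_1=\langle u_i,g\rangle$ and $\gamma_2=\langle v_j,g\rangle$ for a standard Gaussian vector $g$, so joint Gaussianity is indeed available in every application. It would be worth saying this explicitly in your write-up.
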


\subsection{Models}\label{sec:models}
In what follows, we will use several models for generating semi-random $(1-\eps)$ satisfiable instances of Unique Games\footnote{The parameters of the models are the number of vertices $n$, the number of edges $m$, and the probability $\varepsilon$ that an edge is corrupted.}:
\begin{enumerate}
\item \textbf{``Random Edges, Adversarial Constraints'' Model.} The adversary selects a graph $G(V,E)$ on $n$ vertices and $m$ edges and an initial set of constraints $\{\pi_{uv}\}_{(u,v)\in E}$ so that the instance is completely satisfiable. Then she adds every edge of $E$ to a set $E_{\eps}$ with probability $\eps$ (the choices for different edges are independent). Finally, the adversary replaces the constraint for every edge in $E_{\eps}$ with a new constraint of her choice. Note that this model also captures the case where at the last step the constraints for every edge in $E_{\eps}$ are replaced with a new random constraint (random adversary).
\item \textbf{``Adversarial Edges, Random Constraints'' Model.} The adversary selects a graph $G(V,E)$ on $n$ vertices and $m$ edges and an initial set of constraints $\{\pi_{uv}\}_{(u,v)\in E}$ so that the instance is completely satisfiable. Then she chooses a set $E_{\eps}$ of $\eps |E|$ edges. Finally, the constraint for every edge in $E_{\eps}$ is randomly replaced with a new constraint. We will also consider some variations of this model, where at all steps the constraints are MAX $\Gamma$-LIN. In particular, at the last step, choosing a random constraint of the form MAX $\Gamma$-LIN, corresponds to choosing a random value $s\in [|\Gamma|]$.
\item \textbf{``Random Initial Constraints'' Model.} The adversary chooses the constraint graph $G = (V, E)$ and a ``planted solution''
$\set{x_u}$. Then for every edge $(u,v)\in E$, she randomly chooses a permutation (constraint)
$\pi_{uv}$ such that $\pi_{uv}(x_u) = x_v$ (among $(k-1)!$ possible permutations).
Then the adversary chooses an arbitrary set $E_\eps$ of edges of size at most $\eps |E|$ and replaces constraint $\pi_{uv}$ with a constraint $\pi_{uv}'$ of her choice for $(u,v)\in E_\eps$.
\end{enumerate}

\begin{remark}
Without loss of generality, we will assume, when we analyze the algorithms, that the initial completely satisfying assignment corresponds to the ``zero'' layer. I.e. for every edge $(u,v)$, $\pi_{uv}(0)=0$. Note that in reality, the real satisfying assignment is hidden from us.
\end{remark}

\section{Random Edges, Adversarial Constraints}\label{sec:adv-constr}
In this section, we study the ``random edges, adversarial constraints'' model and prove the following result.

\begin{theorem}\label{thm:main1}
Let $k \in \bbN$ ($k\geq 2$), $\varepsilon \in (0,1/3)$,
and $\eta \in (0,1)$.
There exists a polynomial-time approximation algorithm, that given an instance of Unique Games from the ``random edges, adversarial constraints'' model on graph $G$  with
$C \eta^{-3}(1/3 - \varepsilon)^{-4} n\log k (\log (\eta^{-1}\log k))^2$ edges ($C$ is a sufficiently
large absolute
constant), finds a solution of value $(1-\varepsilon - \eta)/(1 + \varepsilon + \eta) + O(1/k)$, with probability $1-o(1)$.
\end{theorem}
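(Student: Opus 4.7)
The plan is to follow the roadmap sketched in the overview section: build a non-standard ``Crude SDP'' (C-SDP), extract a combinatorial skeleton of likely-correct edges from its optimal solution, then use that skeleton to drive a standard LP-plus-rounding argument. Concretely, I would introduce a C-SDP that assigns a unit vector $u_i$ to every node $(u,i)$ of the label-extended graph and minimizes $\sum_{(u,v)\in E}\sum_{i\in[k]} \|u_i - v_{\pi_{uv}(i)}\|^2$, with only very weak side constraints (e.g.\ unit norms and perhaps orthogonality within a block). This is \emph{not} a relaxation --- in the completely satisfiable integral solution, the vectors of the planted layer cancel pairwise along every edge, giving value zero and length-zero edges along that layer. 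Declare an edge $((u,i),(v,\pi_{uv}(i)))$ to be \emph{super short} if $\|u_i - v_{\pi_{uv}(i)}\|^2 \leq c/\log k$.

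The heart of the argument is a structural lemma saying that \emph{for any fixed} C-SDP solution $\{u_i\}$, with high probability over the random choice of the corrupted edge set $E_\eps$, either the satisfiable (``zeroth'') layer already contains many super short edges, or else one can surgically modify $\{u_i\}$ to obtain a strictly better C-SDP solution. The natural surgery is to overwrite the zeroth-layer vectors with those of a better-behaved layer (or with an integral planted assignment) and exploit the asymmetry between the random uncorrupted edges (where the zeroth layer costs nothing) and the adversarial edges (where the savings propagate through a counting argument). The randomness of $E_\eps$ is essential here: conditioned on $\{u_i\}$, the indicator of each edge being uncorrupted is independent, so a Chernoff / Bernstein estimate controls the gap between the random and adversarial contributions to the objective.

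Next I would upgrade this per-solution statement to a \emph{uniform} statement over all C-SDP solutions, since the optimal C-SDP depends on the realized instance. The strategy is to exhibit a small representative family $\mathcal{F}$ such that every C-SDP solution is close (in an appropriate $\ell_2$ sense on edge lengths) to some element of $\mathcal{F}$; applying the structural lemma to each member of $\mathcal{F}$ and taking a union bound then transfers the guarantee to arbitrary solutions. Keeping $|\mathcal{F}|$ small enough that the union bound survives at average degree $\tilde\Omega(\log k)$ is the main obstacle: I expect to need Johnson--Lindenstrauss-style dimension reduction so that each $u_i$ lives in $O(\log n)$ dimensions, plus a careful $\epsilon$-net on unit spheres, giving $\log|\mathcal{F}| = \tilde O(nk)$, which must be balanced against the concentration exponent in the edge count. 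Controlling this balance --- and choosing the super short threshold $1/\log k$ so that both the structural lemma and the net argument go through --- is the technical bottleneck.

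Once the main claim is established, w.h.p.\ the optimal C-SDP solution has $(1-O(\eps))$ of the zeroth-layer edges super short, so the super short edges form a noisy but reliable guess for the correct labelling. I would then write an LP relaxation for Unique Games in which the LP objective rewards picking, for each vertex $u$, a label $i$ such that the edges $((u,i),(v,\pi_{uv}(i)))$ incident to $u$ in the label-extended graph are super short; this LP is nearly satisfied by the planted assignment (up to the $\eps$-fraction of adversarial edges). Finally, round the pair (C-SDP solution, LP solution) jointly using the orthogonal separators technique of~\cite{CMM2}, biased toward labels supported by the LP. A standard calculation, analogous to the CMM rounding, yields an assignment satisfying at least $(1-\eps-\eta)/(1+\eps+\eta) + O(1/k)$ of the constraints, where the $\eta$ slack absorbs the net error and the $O(1/k)$ term absorbs a random-tiebreaking loss in the orthogonal separator step. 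This matches the bound in Theorem~\ref{thm:main1}.
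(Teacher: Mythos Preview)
Your high-level plan matches the paper's: Crude SDP, surgery-on-the-zeroth-layer comparison, net plus union bound to make it uniform, then an LP on super short edges rounded together with the C-SDP via orthogonal separators. The one genuine gap is in your net size estimate, and it is exactly the point you flag as ``the technical bottleneck''.

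You write $\log|\mathcal F|=\tilde O(nk)$, netting \emph{all} vectors $u_i$. With that count, the Hoeffding/Bernstein exponent over the $|E|$ independent corruption indicators would have to be $\Omega(nk)$, forcing average degree $\tilde\Omega(k)$ rather than $\tilde\Omega(\log k)$, and the theorem as stated would not follow. The missing observation is that once you perform the surgery you describe --- replace every $u_0$ by a single fixed unit vector $e$ orthogonal to all SDP vectors, and leave $u_i$ for $i\neq 0$ untouched --- the difference between the original and modified C-SDP objectives depends \emph{only on the zeroth-layer vectors} $\{u_0\}_{u\in V}$: all terms with $i\neq 0$ and $\pi_{uv}(i)\neq 0$ cancel identically. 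Concretely, the paper shows the difference is bounded below by
\[
\sum_{(u,v)\in E} p\bigl(Z_{uv},\,\|u_0-v_0\|^2\bigr),\qquad p(0,\alpha)=\alpha,\ p(1,\alpha)=-2\alpha,
\]
where $Z_{uv}$ indicates corruption; optimality of the C-SDP forces this sum to be $\leq 0$. Thus the structural lemma is a statement about $n$ vectors, not $nk$, and the Johnson--Lindenstrauss net needs size only $\exp(\tilde O(n))$. That is what lets the union bound survive at average degree $\tilde\Omega(\log k)$. (The factor $-2$ in $p$, coming from the two terms $i=0$ and $\pi_{uv}^{-1}(0)$ on a corrupted edge, is also why the argument needs $\varepsilon<1/3$: one needs $\E\,p(Z_{uv},\alpha)=(1-3\varepsilon)\alpha>0$.)

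Two smaller calibrations: the super short threshold in the paper is $c^*\eta^2/\log k$, not $c/\log k$ --- the $\eta^2$ is what feeds into the $(1-\eta)$ factor in the orthogonal-separator analysis; and the LP/SDP rounding is via an \emph{LP-weighted} variant of orthogonal separators (threshold on $\langle u_i,g\rangle$ \emph{and} on the LP weight $x(u,i)$ simultaneously), which is how the LP value $LP\geq(1-\varepsilon-\eta)|E|$ converts to the claimed $(1-\varepsilon-\eta)/(1+\varepsilon+\eta)$ fraction of satisfied constraints.
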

\begin{corollary}
There exists a polynomial-time approximation algorithm, that given an instance of unique games from the ``random edges, adversarial constraints'' model on graph $G$  with
$Cn\log k (\log\log k)^2$ edges ($C$ is a sufficiently
large absolute
constant) and $\varepsilon \leq 1/4$, finds a solution of value $1/2$, with probability $1-o(1)$.
\end{corollary}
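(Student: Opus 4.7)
The plan is to derive the corollary as an immediate specialization of Theorem~\ref{thm:main1}, fixing $\eta$ to a suitable absolute constant so that both the edge-count hypothesis and the value guarantee collapse into the simpler forms stated in the corollary. Since Theorem~\ref{thm:main1} already supplies the algorithm and its probabilistic guarantee, the only work is a parameter check.

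First, I would set $\eta := 1/13$; any value of $\eta$ strictly less than $1/12$ works equally well. With this choice, $\eta^{-3}$ is an absolute constant, and for $\varepsilon \le 1/4$ the factor $(1/3 - \varepsilon)^{-4}$ is bounded by $(1/3 - 1/4)^{-4} = 12^{4}$, since $(1/3 - \varepsilon)^{-4}$ is monotonically increasing in $\varepsilon$ on $(0, 1/3)$. Finally, $(\log(\eta^{-1} \log k))^{2} = (\log(13 \log k))^{2} = O((\log \log k)^{2})$. Absorbing these three bounded factors into the constant $C$ of Theorem~\ref{thm:main1} produces a new absolute constant $C'$ such that the corollary's hypothesis of at least $C' n \log k (\log \log k)^{2}$ edges implies the edge-count hypothesis of the theorem for the chosen $\eta$ and every $\varepsilon \le 1/4$.

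Next, I would verify the value bound. The expression $(1 - \varepsilon - \eta)/(1 + \varepsilon + \eta)$ is monotonically decreasing in $\varepsilon + \eta$, so its worst case under $\varepsilon \le 1/4$ and $\eta = 1/13$ is attained at $\varepsilon = 1/4$, giving
\[
\frac{1 - 1/4 - 1/13}{1 + 1/4 + 1/13} \;=\; \frac{35/52}{69/52} \;=\; \frac{35}{69} \;=\; \frac{1}{2} + \frac{1}{138}.
\]
This gap of $1/138$ is a positive absolute constant, so the additive $O(1/k)$ correction promised by Theorem~\ref{thm:main1} is absorbed for all $k$ exceeding some absolute constant $k_{0}$, ensuring the output value is at least $1/2$; for the finitely many $k < k_{0}$ a trivial case analysis (or a slightly smaller choice of $\eta$) suffices.

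There is no real obstacle in this derivation — the entire content of the corollary is packed into Theorem~\ref{thm:main1}. The only item requiring a modicum of care is the bookkeeping: $\varepsilon$ must be kept bounded away from the critical value $1/3$ and $\eta$ bounded away from $1/12$ by absolute constants, so that none of the factors $\eta^{-3}$, $(1/3 - \varepsilon)^{-4}$, or $(\log(\eta^{-1}\log k))^{2}$ contribute any hidden dependence on $k$ beyond the $(\log\log k)^{2}$ stated in the corollary.
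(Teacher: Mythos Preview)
Your proposal is correct and follows exactly the intended approach: the paper states this corollary immediately after Theorem~\ref{thm:main1} without proof, so it is meant to be read as a direct specialization with $\eta$ set to a small absolute constant, precisely as you do. Your parameter bookkeeping (absorbing $\eta^{-3}$, $(1/3-\varepsilon)^{-4}\le 12^4$, and $(\log(\eta^{-1}\log k))^2=O((\log\log k)^2)$ into $C$, and checking that $(1-\varepsilon-\eta)/(1+\varepsilon+\eta)$ exceeds $1/2$ by an absolute constant margin) is correct, and the paper's Remark after the corollary confirms that the $O(1/k)$ slack can be made as small as needed by enlarging $C$, which handles the small-$k$ residual you flagged.
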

\begin{remark}
We can make the constant in the $O(1/k)$ term in $(1-\varepsilon - \eta)/(1 + \varepsilon + \eta) + O(1/k)$
arbitrarily small by increasing the value of $C$ (and decreasing the value of $\alpha$ in the proof).
We omit the details to simplify the exposition.
\end{remark}

The main challenge in solving ``random edges, adversarial constraints'' unique games is that
the standard SDP relaxation may assign zero vectors to layers corresponding to the optimal
solution (as well as to some other layers) and assign non-zero vectors to layers, where every integral
solution satisfies very few constraints. To address this issue, we introduce a new slightly modified SDP.
As usual the SDP has a vector $u_i$ for every vertex--label pair $(u,i)\in V\times[k]$. We require
that vectors $u_i$, $u_{i'}$ corresponding to the same vertex $u$ are orthogonal: $\langle u_i, u_{i'}\rangle = 0$ for
all $u\in V$ and $i,i'\in[k]$, $i\neq i'$. We also impose triangle inequality constraints:
$$\frac{1}{2}\|u_i - v_j\|^2 + \frac{1}{2}\|u_{i'} - v_j\|^2 \geq 1,$$
for all $(u,v)\in E$ and $i,i',j\in[k]$, $i\neq i'$;
and  require that all vectors have unit
length: $\|u_i\|=1$ for all $u\in V$ and $i\in[k]$. Observe, that our SDP is not a relaxation\footnote{%
Unless, the unique game is from a special family like Linear Unique Games
(see Section~\ref{sec:linGames}).}, since
the integer solution does not satisfy the last constraint.
The objective function is
$$\min \sum_{(u,v)\in E}\sum_{\substack{i\in [k]\\j = \pi_{uv} (i)}}
\frac{\|u_i - v_j\|^2}{2}.$$
Usually, this objective function measures the number of unsatisfied unique games constraints. However, in our
case it does not. In fact, it does not measure any meaningful quantity. Note, that the value of the SDP can be arbitrary large even if the unique games instance is satisfiable.
We call this SDP---the Crude SDP or C-SDP.
Given a C-SDP solution, we define the set of \textit{super short} edges,
which play the central role in our algorithm.
\begin{definition}
We say that an edge $((u,i),(v,j))$ in the label--extended graph is $\eta$--super short, if
$\|u_i-v_j\|^2 \leq c^*\eta^2/\log k$, here $c^*$ is an absolute constant defined in Lemma~\ref{lem:gaussian}. We denote the set of all $\eta$--super short edges by $\Gamma_{\eta}$.
\end{definition}
In Section~\ref{sec:shortedges}, we prove the following surprising result (Theorem~\ref{thm:shortedges}),
which states that all but very few edges in the zero level of the label-extended graph are super short.

\DeclareThmA{shortedges}{
Let $k \in \bbN$ ($k\geq 2$), $c\in(0,1)$, $\varepsilon \in (0,1/3)$,
$\eta \in (c,1)$ and $\gamma \in (\varepsilon + c, 1)$ and let  $G=(V,E)$ be an arbitrary graph
with at least $C \eta^{-2}(\gamma - \varepsilon)^{-1}(1/3 - \varepsilon)^{-4} \times n\log k (\log (c^{-1} \log k))^2$, edges. Consider
a semi-random instance of Unique Games in the ``random edges, adversarial constraints''
model. Let $\{u_i\}$ be the optimal solution of the C-SDP. Then with probability $1-o(1)$, the set
$$\Gamma^0_{\eta} = \Gamma_{\eta} \cap \{((u,0),(v,0)): (u,v)\in E\}$$
contains at least $(1 - \gamma) |E|$ edges.
}

More concretely, we proceed as follows. First, we solve the C-SDP. Then, given the C-SDP solution, we write and solve an LP to obtain weights $x(u,i)\in[0,1]$ for every $(u,i)\in V\times[k]$. These weights are in some sense substitutes for lengths of vectors in the standard SDP relaxation.
In the LP, for every vertex $u\in V$, we require that
$$\sum_{i\in [k]} x(u,i) = 1.$$
The objective function is
$$\max \sum_{((u,i),(v,j))\in \Gamma_{\eta}}\min (x(u,i), x(v,j))$$
(note that the objective function depends on the C-SDP solution).
Denote the value of the LP by $LP$. The intended solution of this LP is $x(u,0)=1$ and
$x(u,i)=0$ for $i\neq 0$. Since the LP contribution of every edge in $\Gamma^0_{\eta}$ is 1,
the value of the intended solution is at least $|\Gamma^0_{\eta}|$.
Applying Theorem~\ref{thm:shortedges} with $\gamma = \eps + \eta$, we get
$|\Gamma^0_{\eta}| \geq (1-\gamma)|E| = (1-\varepsilon -\eta)|E|$, so $LP\geq (1-\varepsilon -\eta)|E|$.
In the next section, we present an approximation algorithm (which rounds C-SDP an LP solutions)
and its analysis. We prove the approximation guarantee in Lemma~\ref{lem:mainAlg1}, which
implies Theorem~\ref{thm:main1}.

\subsection{Lower Bound on the Number of Super Short Edges: Proof of
Theorem~\ref{thm:shortedges}}\label{sec:shortedges}

We need the following lemma.
\newif\iffirstpassonelay
\firstpassonelaytrue
\DeclareLemB{onelay}{
Let $G=(V,E)$ be an arbitrary graph on  $n$ vertices, and let $\varepsilon\in [0,1/3)$, $\rho = 1/3-\varepsilon$, $\nu \in (0,\rho)$. Suppose, that $\{Z_{uv}\}_{(u,v)\in E}$ are i.i.d. Bernoulli random  variables taking values $1$ with probability $\varepsilon$ and $0$  with probability $(1-\varepsilon)$. Define the payoff function
$p: \{0,1\}\times \bbR \to \bbR$ as follows
$$p(z, \alpha) =
\begin{cases}
-2\alpha,&\text{if } z =1;\\
\alpha,&\text{if } z = 0.\\
\end{cases}
$$
Then, with probability at least $1-o(1)$ for every set of vectors $\{u_0\}_{u \in V}$
satisfying (for some significantly large absolute constant $C$)
\begin{equation}\iffirstpassonelay\label{eq:c1}\else\tag{\ref{eq:c1}}\fi
\frac{1}{2}\sum_{(u,v)\in E} \|u_0 - v_0\|^2 \geq C\nu |E| + C \rho^{-4}\log^2 (1/\nu)  n
\end{equation}
the following inequality holds
\begin{equation}\iffirstpassonelay\label{eq:c2}\else\tag{\ref{eq:c2}}\fi
\sum_{(u,v)\in E} p(Z_{uv}, \|u_0 - v_0\|^2) > 0.
\end{equation}
}
\begin{proof}
We need the following dimension reduction lemma, which is based on the Johnson--Lindenstrauss Lemma and
is fairly standard (for an example of using the Johnson--Lindenstrauss Lemma
in SDP rounding see Raghavendra and Steurer~\cite{RS}).

\begin{lemma}\label{lem:JL2}
For every positive $\zeta$, $\eta$ and $\nu$, there exists a set $N$
of unit vectors of size at most
$$\exp\left(O(\zeta^{-2}
\log (1/\eta) \log(1/\nu))\right)$$ such that for every
set of unit vectors $\cal U$ there exists a randomized mapping
$\varphi: {\cal U} \to N$ satisfying the following property:
for every $u,v\in \cal U$,
\begin{equation}\label{eq:JL2}
\Pr ((1+\zeta)^{-1} \|u-v\|^2 - \eta^2 \leq \|\varphi (u) - \varphi (v)\|^2 \leq (1+\zeta)\|u-v\|^2 + \eta^2) \geq 1 - \nu.
\end{equation}
\end{lemma}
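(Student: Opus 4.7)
The plan is to construct $N$ as an $\eta/4$-net of the unit sphere in $\bbR^d$ for $d = C_0\zeta^{-2}\log(1/\nu)$ with $C_0$ a sufficiently large absolute constant, and to build the randomized mapping $\varphi$ by composing a Johnson--Lindenstrauss projection into $\bbR^d$ with rounding to this net. The standard volume bound gives $|N| \leq (12/\eta)^d$, i.e.\ $\log|N| = O(\zeta^{-2}\log(1/\eta)\log(1/\nu))$, matching the target size. Crucially $N$ depends only on $\zeta, \eta, \nu$ and not on $\cal U$, so every set of unit vectors (regardless of ambient dimension or cardinality) lands in the same fixed $N$.

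Given $\cal U \subseteq \bbR^D$, draw $T: \bbR^D \to \bbR^d$ from a standard JL distribution (e.g.\ a rescaled Gaussian matrix) normalized so that $\E\|Tx\|^2 = \|x\|^2$ for all $x$. For any fixed pair of unit vectors $u,v \in \cal U$, classical JL concentration guarantees that for $C_0$ sufficiently large, with probability at least $1-\nu$ all three quantities $\|Tu\|^2$, $\|Tv\|^2$, and $\|T(u-v)\|^2/\|u-v\|^2$ lie in $[1-\zeta/10,\,1+\zeta/10]$; a union bound over these three events is absorbed into the choice of $C_0$. On this good event, set $\hat u = Tu/\|Tu\|$ and $\hat v = Tv/\|Tv\|$, both unit vectors in $\bbR^d$, and let $\varphi(u)$ and $\varphi(v)$ be the closest elements of $N$ to $\hat u$ and $\hat v$, respectively.

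It then remains to chain the three sources of error. A short computation using $\bigl|\|Tu\|-1\bigr|, \bigl|\|Tv\|-1\bigr| \leq \zeta/20$ shows that normalization distorts squared pairwise distances by at most a multiplicative $(1+\zeta/2)$ factor, so on the good event $(1+\zeta)^{-1}\|u-v\|^2 \leq \|\hat u - \hat v\|^2 \leq (1+\zeta)\|u-v\|^2$. Next, since $\|\varphi(u) - \hat u\|, \|\varphi(v) - \hat v\| \leq \eta/4$ and $\|\hat u - \hat v\| \leq 2$, expanding squared norms via the triangle inequality gives $\bigl|\|\varphi(u)-\varphi(v)\|^2 - \|\hat u - \hat v\|^2\bigr| \leq \eta^2$ (the cross terms are bounded using $\|\hat u - \hat v\| \leq 2$ and $\eta \leq 1$; otherwise the conclusion is trivial). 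Combining these bounds yields \eqref{eq:JL2}.

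The main obstacle is handling the normalization step cleanly: even when $\|T(u-v)\|^2$ closely tracks $\|u-v\|^2$, rescaling $Tu$ and $Tv$ by slightly different factors can in principle distort their difference, particularly when $\|u-v\|$ is small. The saving is that $\|Tu\|$ and $\|Tv\|$ are simultaneously within $\zeta/20$ of $1$, so the two scale factors are close to \emph{each other}, not merely close to $1$, and a first-order expansion then bounds the multiplicative distortion. Choosing the JL precision to be $\zeta/10$ and the net resolution to be $\eta/4$ leaves enough slack that the multiplicative $(1+\zeta)$ and additive $\eta^2$ tolerances in \eqref{eq:JL2} are comfortably satisfied.
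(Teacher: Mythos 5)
Your overall route is the same as the paper's (a Johnson--Lindenstrauss projection into $O(\zeta^{-2}\log(1/\nu))$ dimensions followed by rounding to a fixed net), but your net resolution is too coarse and the key rounding estimate fails. If $\|\varphi(u)-\hat u\|,\|\varphi(v)-\hat v\|\le \eta/4$, then with $a=\hat u-\hat v$ and $b=\varphi(u)-\varphi(v)$ you only get $\bigl|\,\|b\|^2-\|a\|^2\,\bigr|\le 2\cdot\tfrac{\eta}{2}\cdot\|a\|+\tfrac{\eta^2}{4}$; the cross term is \emph{first order} in the net radius, so for $\|a\|$ of order $1$ the perturbation is of order $\eta$, not $\eta^2$ (e.g.\ $\eta=0.01$, $\|a\|=1$ gives error $\approx 0.02\gg 10^{-4}$). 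Hence your claim $\bigl|\,\|\varphi(u)-\varphi(v)\|^2-\|\hat u-\hat v\|^2\,\bigr|\le\eta^2$ is false in general, and the error cannot always be absorbed by the leftover multiplicative slack either (consider $\eta\ge\zeta$, or pairs with $\|u-v\|^2\approx\eta$ when $\eta>\zeta^2$). To get additive error $\eta^2$ you need net radius $\Theta(\eta^2)$ --- this is exactly what the paper does (an $\eta^2/32$-net) --- and this costs nothing in the size bound since $\log(1/\eta^2)=2\log(1/\eta)$.

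A second, smaller soft spot is the normalization step: your good event (the three conditions on $\|Tu\|^2$, $\|Tv\|^2$, $\|T(u-v)\|^2$) does not imply the claimed $(1+\zeta/2)$ multiplicative distortion. From those conditions alone one only knows $\bigl|\,\|Tu\|-\|Tv\|\,\bigr|\le\|T(u-v)\|$, and the identity $\|\hat u-\hat v\|^2=\bigl(\|Tu-Tv\|^2-(\|Tu\|-\|Tv\|)^2\bigr)/\bigl(\|Tu\|\,\|Tv\|\bigr)$ shows the normalized distance can collapse entirely: $Tu$ parallel to $Tv$ with norms differing by $\approx\|u-v\|$ is consistent with all three conditions whenever $\|u-v\|\lesssim\zeta$, and then $\varphi(u)=\varphi(v)$ violates the lower bound in~\eqref{eq:JL2} for pairs with $\eta\ll\|u-v\|\ll\zeta$. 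To justify ``the scale factors are close to each other'' you must also condition on $T$ nearly preserving the orthogonality of $u+v$ and $u-v$ (equivalently, inner products), which gives $\bigl|\,\|Tu\|-\|Tv\|\,\bigr|=O(\zeta)\|u-v\|$ and makes your first-order expansion legitimate; the paper's sketch avoids the issue by rounding the projection directly to the net without renormalizing. Both gaps are local and fixable, but as written the argument does not establish the lemma.
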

\begin{proof}[Proof sketch]
To prove the lemma, we consider an $O(\zeta^{-2} \log (1/\nu))$ dimensional space $L$
and choose a $\eta^2/32$-net $N$ in it. The size of $N$ satisfies the bound in the lemma. To construct the mapping
$\varphi: {\cal U}\to N$, we project all vectors from $\cal U$ to $L$ using the Johnson--Lindenstrauss lemma and then define $\varphi(u)$ to be the closest vector
$u^* \in N$ to the projection of $u$.
\end{proof}

\begin{proof}[Proof of Lemma~\ref{lem:onelay}]
Set $\zeta= \rho/5 \equiv (1/3 - \varepsilon)/5$. Define a new payoff function,
$$p_{\zeta}(z, \alpha) =
\begin{cases}
-2(1+\zeta) \alpha,&\text{if } z =1;\\
(1+\zeta)^{-1}\alpha,&\text{if } z = 0.\\
\end{cases}
$$
Note, that $\zeta < 1/15$ and $p_{\zeta}(0, \alpha) - p_{\zeta}(1, \alpha) = ((1+\zeta)^{-1} - (-2(1+\zeta)))\alpha \leq 4\alpha$ (for $\alpha\geq 0$).
Suppose, that for a given realization $\{Z^*_{uv}\}_{(u,v)\in E}$ of $\{Z_{uv}\}_{(u,v)\in E}$ there exists
a set of unit vectors $\{u_0\}_{u\in V}$
satisfying condition~(\ref{eq:c1}) and violating~(\ref{eq:c2}).
Embed vectors $\{u_0\}_{u\in V}$ into a net $N$ of size $\exp\left(O(\rho^{-2}
\log^2 (1/\nu))\right)$ using Lemma~\ref{lem:JL2},
such that for $(1-\nu^2/2)$ fraction of all edges $(u,v)\in E$,
the following condition holds
$$(1+\zeta)^{-1}\|u_0- v_0\|^2 - \nu^2/8 \leq
 \|u^{*}- v^{*}\|^2 \leq (1+\zeta)\|u_0- v_0\|^2 + \nu^2/4,$$
here $u^*$ is the image of $u_0$; $v^*$ is the image of $v_0$.
Then,
\begin{equation}\label{eq:wtn1}
\sum_{(u,v)\in E} p_{\zeta}(Z^*_{uv}, \|u^{*} - v^{*}\|^2) < \nu^2 |E| < \nu\rho |E| = 5\nu\zeta |E|.
\end{equation}
and (since we choose $C$ to be sufficiently large)
\begin{equation}\label{eq:wtn2}
\sum_{(u,v)\in E} \|u^{*} - v^{*}\|^2 \geq 40\nu |E| + C\rho^{-4}\log^2 (1/\nu) n.
\end{equation}

Thus, the existence of vectors $\{u_0\}_{u\in V}$ satisfying condition~(\ref{eq:c1}) and
violating~(\ref{eq:c2}) implies the existence of vectors $\{u^{*}\}_{u\in V}$ satisfying~(\ref{eq:wtn1})
and~(\ref{eq:wtn2}). We now show that for a random $\{Z_{uv}\}$ such vectors $\{u^{*}\}$
exist with exponentially small probability. Fix a set $\{u^{*}\}_{u \in V} \subset N$ and consider random $\{Z_{uv}\}$,
\begin{eqnarray*}
\E \sum_{(u,v)\in E} p_{\zeta}(Z_{uv}, \|u^{*} - v^{*}\|^2) &\geq&
\sum_{(u,v)\in E} \left((1-\varepsilon)(1-\zeta) - 2 (1+\zeta)\varepsilon\right) \|u^{*} - v^{*}\|^2\\
&=& (1-3\varepsilon -3\varepsilon\zeta -\zeta) \sum_{(u,v)\in E} \|u^{*} - v^{*}\|^2\\
&\geq& (1-3\varepsilon -4\zeta) \sum_{(u,v)\in E} \|u^{*} - v^{*}\|^2\\
&\geq& \zeta \sum_{(u,v)\in E} \|u^{*} - v^{*}\|^2.
\end{eqnarray*}
By Hoeffding's inequality (using that $p_{\zeta}(0, \|u^{*} - v^{*}\|^2) - p_{\zeta}(1, \|u^{*} - v^{*}\|^2) \leq
4 \|u^{*} - v^{*}\|^2$, $\sum_{(u,v)\in E}\|u^{*} - v^{*}\|^2 - 10\nu |E| \geq \frac{3}{4} \sum_{(u,v)\in E} \|u^{*} - v^{*}\|^2$ and $\|u^{*}-v^{*}\|^2 \leq 4$),
\begin{eqnarray*}
\Pr\left(\sum_{(u,v)\in E} p_{\zeta}(Z_{uv}, \|u^{*} - v^{*}\|^2) < 2\nu\rho |E| \right)
&\leq&
\exp\left(-\frac{2\zeta^2\left(\sum_{(u,v)\in E} \|u^{*} - v^{*}\|^2 - 10\nu |E|\right)^2}{\sum_{(u,v)\in E} 16\|u^{*} - v^{*}\|^4}\right)\\
&\leq&
\exp\left(-\frac{\zeta ^2\left(\sum_{(u,v)\in E} \|u^{*} - v^{*}\|^2\right)^2}{\sum_{(u,v)\in E} 64\|u^{*} - v^{*}\|^2}\right)\\
&\leq&
\exp\left(-\frac{\zeta^2}{64}\sum_{(u,v)\in E} \|u^{*} - v^{*}\|^2\right)\\
&\leq&
\exp\left(-\frac{C\zeta^2}{64}\rho^{-4}\log^2 (1/\nu)n\right)\\
&=&
\exp\left(-\frac{C}{25\cdot64}\rho^{-2}\log^2 (1/\nu)n\right)\\
\end{eqnarray*}
The number of all possible subsets $\{u^{*}\}_{u \in V} \subset N$ is at most
$$|N|^n\leq \exp\left(O(n\rho^{-2} \log^2(1/\nu))\right).$$
Hence, by the union bound with probability at least $1-\exp(-n)=1-o(1)$ for random
$\{Z_{uv}\}_{(u,v)\in E}$, there does not exist a set of vectors $\{u_0\}_{u\in V}$
satisfying condition~(\ref{eq:c1}) and violating~(\ref{eq:c2}).
\end{proof}
\end{proof}

\firstpassonelayfalse

\begin{proof}[Proof of Theorem~\ref{thm:shortedges}]
Let $\{u^*_i\}$ be the optimal SDP solution. Pick a unit vector $e$ orthogonal to
all vectors $u^*_i$. Define a new SDP solution $u^{int}_0 = e$ and
$u^{int}_i = u^*_i$ for $i\neq 0$ (for all $u\in V$). Note that restricted to $\{u^{int}_0\}_{u\in V}$ this
solution is integral. Since $\{u^*_i\}$ is the optimal solution,
$$\sum_{(u,v)\in E}\sum_{\substack{i\in [k]\\j=\pi_{uv}(i)}} \|u^*_i - v^*_j\|^2 \leq
\sum_{(u,v)\in E}\sum_{\substack{i\in [k]\\j=\pi_{uv}(i)}} \|u^{int}_i - v^{int}_j\|^2.$$
Denote by $E_{\varepsilon}$ the set of corrupted edges. Let $Z_{uv} = 1$, if $(u,v)\in E_{\varepsilon}$
and $Z_{uv}= 0$, otherwise. Let $\tilde{E}_{\varepsilon} = \{(u,v)\in E: \pi_{uv}(0) \neq 0\}$.
Clearly, $\tilde{E}_{\varepsilon}\subset  E_{\varepsilon}$.
Write,
\begin{eqnarray*}
\sum_{(u,v)\in E}\sum_{\substack{i\in [k]\\j=\pi_{uv}(i)}} \|u^*_i - v^*_j\|^2 &-& \|u^{int}_i - v^{int}_j\|^2
= \\
&=&
\sum_{(u,v)\in E\setminus \tilde{E}_{\varepsilon}} \|u^*_0 - v^*_0\|^2 +
\sum_{(u,v)\in \tilde{E}_{\varepsilon}} \|u^*_0 - v^*_{\pi_{uv}(0)}\|^2 + \|u^*_{\pi_{vu}(0)} - v^*_0\|^2
\\&\phantom{=}& -\sum_{(u,v)\in \tilde{E}_{\varepsilon}}
\|u^{int}_0 - v^{int}_{\pi_{uv}(0)}\|^2 + \|u^{int}_{\pi_{vu}(0)} - v^{int}_0\|^2.
\end{eqnarray*}
For $(u,v)\in \tilde{E}_{\varepsilon}$, we have
$\|u^{int}_{0} - v^{int}_{\pi_{uv}(0)}\|^2 = \|u^{int}_{\pi_{vu}(0)} - v^{int}_{0}\|^2= 2$ and
$\|u^*_0 - v^*_{\pi_{uv}(0)}\|^2 \geq 2 - \|u^*_0 - v^*_0\|^2$. Thus,
\begin{eqnarray*}
\sum_{(u,v)\in E}\sum_{\substack{i\in [k]\\j=\pi_{uv}(i)}} \|u^*_i - v^*_j\|^2
- \|u^{int}_i - v^{int}_j\|^2
&\geq&
\sum_{(u,v)\in E\setminus \tilde{E}_{\varepsilon}} \|u^*_0 - v^*_0\|^2 - 2\sum_{(u,v)\in \tilde{E}_{\varepsilon}}
\|u^{*}_0 - v^{*}_0\|^2\\
&\geq&
\sum_{(u,v)\in E\setminus E_{\varepsilon}} \|u^*_0 - v^*_0\|^2 - 2\sum_{(u,v)\in E_{\varepsilon}}
\|u^{*}_0 - v^{*}_0\|^2\\
&=&
\sum_{(u,v)\in E} p(Z_{uv}, \|u^*_0 - v^*_0\|^2),
\end{eqnarray*}
where $p(\cdot,\cdot)$ is the function from Lemma~\ref{lem:onelay}. By Lemma~\ref{lem:onelay},
with probability $1-o(1)$,  for some absolute constant $C'$, $\rho = (1/3-\varepsilon)$,
$\nu = \frac{c^* \eta^2(\gamma- \varepsilon)}{2 C' \log k}$,
$$\frac{1}{2|E|}\sum_{(u,v)\in E} \|u_0^* - v_0^*\|^2 < \frac{C'\nu |E| + C' \rho^{-4}\log^2 (1/\nu)  n}{|E|}
\leq  c^*\eta^2(\gamma- \varepsilon)\log^{-1} k.
$$
The last inequality holds, because
$$|E|\geq C \eta^{-2}(\gamma - \varepsilon)^{-1}\rho^{-4} n\log k (\log(c^{-1} \log k))^2 \geq
\frac{C'}{2c^*}  \eta^{-2}(\gamma - \varepsilon)^{-1}\rho^{-4} n\log k \log^2(1/\nu)^2.$$
By the Markov inequality, for all but $(\gamma - \varepsilon)/2$ fraction of
edges $(u,v)\in E$, $\|u^*_0-v_0^*\|^2 \leq c^*\eta^2/\log k$.
Recall that a pair $((u,0),(v,0))$ is an $\eta$--super short edge
in the label--extended graph if $\pi_{uv}(0) = 0$ (i.e., $(u_0,v_0)$ is an edge of the label--extended graph) and
$\|u^*_0-v_0^*\|^2 \leq c^*\eta^2/\log k$.  By the Chernoff Bound,
$|E_{\varepsilon}|\leq (\varepsilon + (\gamma-\varepsilon)/2)|E|$ with probability $(1-o(1))$; therefore,
$\pi_{uv}(0) \neq 0$ for at most a $\varepsilon + (\gamma-\varepsilon)/2$ fraction of edges.
Thus, with probability $(1-o(1))$, there are at least $(1 - (\gamma-\varepsilon)/2 - (\varepsilon + (\gamma-\varepsilon)/2))m = (1-\gamma) |E|$ $\eta$--super short edges.
\end{proof}

\subsection{SDP and LP Rounding}
We now present an algorithm that given a C-SDP solution $\{u_i\}$ and
an LP solution $\{x(u,i)\}$ finds an integer solution.
We first present a procedure for sampling subsets of vertex--label
pairs, which is an analog of the algorithm for finding orthogonal separators
in CMMb~\cite{CMM2}.

\pagebreak

\rule{0pt}{12pt}
\hrule height 0.8pt
\rule{0pt}{1pt}
\hrule height 0.4pt
\rule{0pt}{6pt}

\noindent \textbf{LP Weighted Orthogonal Separators}

\noindent \textbf{Input:} An SDP solution $\{u_i\}$, an LP solution $\{x(u,i)\}$.

\noindent \textbf{Output:} A set $S$ of label vertex pairs $(u,i)$.

\begin{enumerate}\compactify
\item Set a parameter $\alpha = 1/(2k^2)$, which we call the probability scale.
\item Generate a random Gaussian vector $g$ with independent components distributed as $\calN (0,1)$.
\item Fix a threshold $t$ s.t. $\Pr(\xi \geq t) = \alpha$, where $\xi \sim \calN(0,1)$.
\item Pick a random uniform value $r$ in the interval $(0,1)$.
\item Find set
$$S = \set{(u,i)\in V\times[k]: \langle u_i, g \rangle \geq t \text{ and } x(u,i) \geq r}.$$
\item Return $S$.
\end{enumerate}

\rule{0pt}{1pt}
\hrule height 0.4pt
\rule{0pt}{1pt}
\hrule height 0.8pt
\rule{0pt}{12pt}

The rounding algorithm is given below.

\rule{0pt}{12pt}
\hrule height 0.8pt
\rule{0pt}{1pt}
\hrule height 0.4pt
\rule{0pt}{6pt}

\noindent \textbf{LP and SDP Based Rounding Algorithm}

\noindent \textbf{Input:} An instance of unique games.

\noindent \textbf{Output:} An assignment of labels to the vertices.
\begin{enumerate}\compactify
\item Solve the SDP.
\item Find the set of all super short edges $\Gamma_{\eta}$.
\item Solve the LP.
\item Mark all vertices unprocessed.
\item while (there are unprocessed vertices)
\begin{itemize}
\item Sample a set $S$ of vertex--label pairs using LP weighted orthogonal separators.
\item For all unprocessed vertices $u$:
\begin{itemize}
\item Let $S_u = \{i: (u,i) \in S\}$
\item If $S_u$ contains exactly one element $i$, assign label $i$ to $u$ and mark $u$ as
processed.
\end{itemize}
\end{itemize}
\end{enumerate}
If after $nk/\alpha$ iterations, there are unprocessed vertices, the algorithm assigns arbitrary labels to them and terminates.

\rule{0pt}{1pt}
\hrule height 0.4pt
\rule{0pt}{1pt}
\hrule height 0.8pt
\rule{0pt}{12pt}

\DeclareLemA{ortsep}{
Let $S$ be an LP weighted orthogonal separator. Then, for every $(u,i)\in  V \times [k]$,
\begin{itemize}
\item $\Pr((u,i)\in S) = \alpha x(u,i).$
\end{itemize}
For every $((u,i), (v,j))\in \Gamma_{\eta}$ and $(u,i')\in V\times[k]$ ($i'\neq i$),
\begin{itemize}
\item $\Pr((u,i)\in S \text{ and } (v,j)\in S) \geq
\alpha \min(x(u,i), x(v,j))(1-\eta).$
\item $\Pr((u,i)\in S; (v,j)\in S; (u,i')\in S) \leq \alpha \min(x(u,i), x(v,j))/(2k^2).$
\end{itemize}
}
\begin{proof}
We have
$$\Pr((u,i)\in S) = \Pr(\langle u_i, g \rangle \geq t \text{ and } x(u,i) \geq r)=
\Pr(\langle u_i, g \rangle \geq t)\Pr(x(u,i) \geq r) = \alpha x(u,i).$$
Then, by Lemma~\ref{lem:gaussian},
\begin{eqnarray*}
\Pr((u,i)\in S \text{ and } (v,j)\in S) &=& \Pr(\langle u_i, g \rangle \geq t \text{ and } \langle v_j, g \rangle \geq t)\Pr(\min(x(u,i), x(v,j)) \geq r)\\
&\geq& \alpha (1-\eta) \min(x(u,i), x(v,j)).
\end{eqnarray*}
Finally, we have (below we use that $\langle u_i, g \rangle$ and $\langle u_{i'}, g \rangle$ are independent random
variables)
\begin{eqnarray*}
\Pr((u,i)\in S; (v,j)\in S; (u,i')\in S) &\leq& \Pr(\langle u_i, g \rangle \geq t)
\Pr(\langle u_{i'}, g \rangle \geq t) \Pr(\min(x(u,i),x(v,j)) \geq r)\\
&=& \alpha^2 \min(x(u,i),x(v,j)) \leq \alpha \min(x(u,i), x(v,j))/(2k^2).
\end{eqnarray*}
\end{proof}

\begin{lemma}\label{lem:mainAlg1}
Given a C-SDP solution $\{u_i\}$ and an LP solution $\{x(u,i)\}$ of value at least $LP \geq x |E|$, the algorithm
finds a solution to the unique games instance that satisfies $(1-\eta) x/(2-(1-\eta)x) - O(1/k)$ fraction of
all constraints in the expectation.
\end{lemma}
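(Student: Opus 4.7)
The plan is to analyze the rounding loop iteration by iteration, then apply a standard renewal argument together with Jensen's inequality. Since each iteration uses fresh independent randomness and a vertex stays unprocessed until the first iteration at which $|S_u| = 1$, I lower-bound $\Pr(u \text{ ends with label } i,\, v \text{ ends with label } j)$ by the probability of the single favorable scenario in which both endpoints get their intended labels in the very first iteration at which either of them is selected.

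Set $p_u = \Pr(|S_u|=1)$, $p_{uv}^{ij} = \Pr(S_u = \{i\},\, S_v = \{j\})$, and $p_{uv} = \Pr(|S_u|=1,\, |S_v|=1)$ in a single iteration. From Lemma~\ref{lem:ortsep} and the LP constraint $\sum_i x(u,i) = 1$ I have $p_u \leq \sum_i \alpha\, x(u,i) = \alpha$, and for any super-short edge $((u,i),(v,j)) \in \Gamma_\eta$, writing $m_{ij}^{uv} = \min(x(u,i), x(v,j))$, inclusion-exclusion gives
\begin{equation*}
p_{uv}^{ij} \;\geq\; \alpha(1-\eta)\, m_{ij}^{uv} \;-\; \sum_{i' \neq i}\frac{\alpha\, m_{ij}^{uv}}{2k^2} \;-\; \sum_{j' \neq j}\frac{\alpha\, m_{ij}^{uv}}{2k^2} \;\geq\; \alpha(1-\eta)\, m_{ij}^{uv}\bigl(1-O(1/k)\bigr).
\end{equation*}

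Fix a graph edge $(u,v) \in E$ and let $R_{uv} := \{|S_u|=1\} \cup \{|S_v|=1\}$. By the renewal argument, conditional on the first iteration at which $R_{uv}$ occurs, the joint distribution of $(S_u, S_v)$ equals the per-iteration distribution conditioned on $R_{uv}$; hence $\Pr(u=i,\, v=\pi_{uv}(i)) \geq p_{uv}^{i\pi_{uv}(i)}/p_R$ with $p_R = p_u + p_v - p_{uv}$. The events $\{S_u=\{i\}\}$ are pairwise disjoint across $i$, so $p_{uv} \geq \sum_i p_{uv}^{i\pi_{uv}(i)} \geq \alpha(1-\eta)\, L_{uv}(1-O(1/k))$, where $L_{uv} := \sum_i m_{i,\pi_{uv}(i)}^{uv}\, \mathbf{1}\bigl[((u,i),(v,\pi_{uv}(i)))\in\Gamma_\eta\bigr]$ is the per-edge LP contribution. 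Summing the previous bound over $i$ and using $p_u, p_v \leq \alpha$ to upper-bound $p_R$ yields
\begin{equation*}
\Pr\bigl(x_v = \pi_{uv}(x_u)\bigr) \;\geq\; \frac{\alpha(1-\eta)L_{uv}(1-O(1/k))}{2\alpha - \alpha(1-\eta)L_{uv}(1-O(1/k))} \;=\; \frac{(1-\eta)L_{uv}}{2 - (1-\eta)L_{uv}}\,\bigl(1-O(1/k)\bigr).
\end{equation*}

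Finally, sum over $(u,v) \in E$ and invoke Jensen's inequality: the function $f(y) = y/(2-y)$ is convex and increasing on $[0,1]$, and the edge-average of $(1-\eta)L_{uv}$ equals $(1-\eta)\,LP/|E| \geq (1-\eta)x$, so $|E|^{-1}\sum_{(u,v)} f((1-\eta)L_{uv}) \geq f((1-\eta)x) = (1-\eta)x / (2-(1-\eta)x)$. The multiplicative $(1 - O(1/k))$ factor collapses into the claimed additive $-O(1/k)$ slack because $f \leq 1$. The $nk/\alpha$ iteration cap is harmless since any given vertex remains unprocessed after that many rounds with probability at most $(1-p_u)^{nk/\alpha} \leq \exp(-\Omega(nk))$, contributing only $o(1)$ in expectation. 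The key subtlety—and the reason for the sharper denominator $2 - (1-\eta)x$ rather than the trivial $2$—is that I sum $p_{uv}^{i\pi_{uv}(i)}$ over $i$ into the bound on $p_{uv}$ (so that $p_R$ itself shrinks with $L_{uv}$) \emph{before} taking the quotient, rather than bounding each $p_{uv}^{ij}/p_R$ separately.
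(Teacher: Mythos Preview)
Your proof is correct and follows essentially the same route as the paper: bound the per-iteration ``both endpoints get matching labels'' probability via Lemma~\ref{lem:ortsep}, feed this into the geometric/renewal identity to obtain $\delta'(u,v)/(2-\delta'(u,v))$ with $\delta'(u,v)=(1-\eta-O(1/k))L_{uv}$, and finish with Jensen on the convex map $y\mapsto y/(2-y)$. The only cosmetic differences are that the paper absorbs the $O(1/k)$ loss into the factor $(1-\eta-1/k)$ rather than carrying a multiplicative $(1-O(1/k))$, and it writes out the geometric series explicitly instead of invoking the renewal viewpoint.
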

\begin{proof}
Consider an arbitrary edge $(u,v)$. We estimate the probability that the algorithm
assigns labels that satisfy the constraint $\pi_{uv}$. For simplicity of presentation, assume
that $\pi_{uv}(i) = i$ (we may always assume this by renaming the labels of $v$). Let
$\delta_i(u,v) = \min (x(u,i), x(v,i))$ if $((u,i),(v,i))\in \Gamma_{\eta}$; and $\delta_i(u,v) = 0$, otherwise. Let $\delta(u,v) = \sum_{i} \delta_i(u,v)$ and $\delta'(u,v)
= \delta(u,v) (1-\eta -1/k)$.

Consider an arbitrary
iteration at which both $u$ and $v$ have not yet been processed. By Lemma~\ref{lem:ortsep} (item 2), if
$((u,i), (v,i))\in \Gamma_{\eta}$, then
$$\Pr (i \in S_u \text { and } i \in S_v) \geq \alpha \min (x(u,i), x(v,i)) (1-\eta).$$
Then, by Lemma~\ref{lem:ortsep} (3) and the union bound, the probability that $S_u$ or $S_v$ contains more than
one element and $i \in S_u$, $i \in S_v$ is at most $\alpha \min (x(u,i), x(v,i))/k$. Hence,
the algorithm assigns $i$ to both $u$ and $v$ with probability at least
$$\alpha \min (x(u,i), x(v,i)) (1-\eta - 1/k) = \alpha\delta_i(u,v)\times (1-\eta - 1/k).$$
The probability that the algorithm assigns the same label to $u$ and $v$ is at least
$$\sum_{i: ((u,i), (v,i))\in \Gamma_{\eta}} \alpha\delta_i(u,v)\times (1-\eta - 1/k) =
\alpha \delta'(u,v).$$
The probability that the algorithm assigns a label to $u$ is at most $\alpha$ and similarly
the probability that the algorithm assigns a label to $v$ is at most $\alpha$. Thus the probability
that it assigns a label to either $u$ or $v$ is at most $\alpha(2 -  \delta'(u,v))$.

Hence, the probability that the algorithm assigns the same label to $u$ and $v$ at one of the
iterations is at least
(note that the probability that there are unlabeled vertices when the algorithm stops after $nk/\alpha$ iterations
is exponentially small, therefore, for simplicity we may assume that the number of iterations is infinite)
$$\sum_{t = 0}^{\infty} (1 - \alpha (2-\delta'(u,v)))^t \alpha \delta'(u,v) =
\frac{\alpha \delta'(u,v)}{\alpha (2-\delta'(u,v))} =
\frac{\delta'(u,v)}{2 -\delta'(u,v)}.$$
The function $\delta \mapsto \delta/(2-\delta)$ is convex on $(0,2)$ and
$$\frac{1}{|E|}\sum_{(u,v)\in E} \delta'(u,v) = \frac{1}{|E|}\sum_{(u,v)\in E} \delta(u,v) (1-\eta -1/k) \geq
x (1-\eta -1/k),$$
thus, by Jensen's inequality, the expected number of satisfied constraints is at least
$$\frac{x (1-\eta -1/k)}{2 -x (1-\eta -1/k)} |E|.$$
\end{proof}

\subsection{Hardness: Semi-Random Instances for $\eps \geq 1/2$}
In this section, we show that the problem becomes hard when $\eps \geq 1/2$ assuming the 2--to--2 conjecture.
The 2--to--2 conjecture follows from Khot's 2--to--1 conjecture\footnote{
The 2--to--1 conjecture implies the 2--to--2 conjecture because every 2--to--1 game can be converted to
a 2--to--2 game of the same value as follows. Let $\calI$ be an instance of a 2--to--1 game on a graph $(L, R, E)$,
s.t. every constraint has degree 2 on the left side, and degree 1 on the right side.
We create two copies $a_1$ and $a_2$ for every label $a$ for vertices in $L$. We replace each constraint $\Pi_{lr}$ with a constraint
$\Pi_{lr}^*$ defined by $\Pi_{lr}^*(a_1, b) = \Pi_{lr}^*(a_2, b) = \Pi_{lr}(a, b)$. We obtain a 2--to--2 game. It is clear that
its value equals the value of the original 2--to--1 game.},
which is a frequently used complexity assumption~\cite{Kho02}.
We prove the following theorem.
\begin{theorem}\label{thm:hardness}
For every $\eps \geq 1/2$ and $\delta >0$, no polynomial-time algorithm can distinguish with probability greater than $o(1)$ between the following two cases:
\begin{enumerate}
\item Yes Case: the instance is a $1-\eps$ satisfiable semi-random instance (in the ``random edges, adversarial constraints'' model),
\item No Case: the instance is at most $\delta$ satisfiable.
\end{enumerate}
This result holds if the 2--to--2 conjecture holds.
\end{theorem}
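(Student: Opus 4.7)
The plan is to reduce the 2-to-1 conjecture (which by the earlier footnote implies a 2-to-2 conjecture with perfect completeness) to the claimed hardness. Let $\mathcal{J}$ be a 2-to-2 instance on graph $G=(V,E)$, where each constraint $\Pi_{uv} = \pi^1_{uv} \cup \pi^2_{uv}$ is the union of two disjoint perfect matchings on $[k]$. The 2-to-2 conjecture then asserts that for every $\delta'>0$ it is NP-hard to decide whether $\mathcal{J}$ is completely satisfiable or has value at most $\delta'$. I will produce from $\mathcal{J}$ a Unique Games instance $\calI$ on the same graph $G$ with the same label set, such that any distinguisher for the two cases of Theorem~\ref{thm:hardness} applied to $\calI$ yields a distinguisher for $\mathcal{J}$.

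First consider $\eps=1/2$. For each edge $(u,v)$ toss an independent fair coin $c_{uv}\in\{1,2\}$ and set the UG constraint to $\pi_{uv}:=\pi^{c_{uv}}_{uv}$. In the Yes case, fix an assignment $x^*$ satisfying $\mathcal{J}$; for each edge, exactly one of the two matchings (call it $\pi^g_{uv}$) contains $(x^*_u,x^*_v)$ and the other ($\pi^b_{uv}$) does not, since the matchings are disjoint. Then $\calI$ is a genuine sample from the ``random edges, adversarial constraints'' model: take the initial completely satisfiable constraints to be $\pi^g_{uv}$, let $E_{\eps}$ be the random set of edges where $c_{uv}$ selected $\pi^b_{uv}$ (an independent $1/2$-random subset of $E$), and let the adversary's corrupting constraint on each such edge be $\pi^b_{uv}$. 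A Chernoff bound shows that $x^*$ satisfies $1/2-o(1)$ of the UG constraints of $\calI$ w.h.p. In the No case, any assignment $x$ that satisfies the UG constraint $\pi^{c_{uv}}_{uv}$ places $(x_u,x_v)$ in $\pi^1_{uv}\cup\pi^2_{uv}=\Pi_{uv}$, so a $V$-satisfying UG assignment yields a $V$-satisfying assignment for $\mathcal{J}$, forcing $V\leq\delta'$.

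For general $\eps\in[1/2,1]$, modify the construction by independently on each edge, with probability $q=2\eps-1$, overwriting $\pi_{uv}$ by a uniformly random permutation on $[k]$. A direct computation shows that $\calI$ is semi-random with corruption probability exactly $\eps$: declaring $(u,v)\in E_{\eps}$ iff the final constraint differs from $\pi^g_{uv}$ gives an independent $\eps$-random subset of $E$, and the adversary's conditional randomized strategy on $E_{\eps}$ outputs $\pi^b_{uv}$ with probability $(1-\eps)/\eps$ and a uniformly random permutation with probability $(2\eps-1)/\eps$ (the model permits randomized adversaries). In the Yes case, $x^*$ satisfies each edge independently with probability $(1-\eps)+(2\eps-1)/k$, so the UG value is at least $1-\eps-O(1/k)$; in the No case the same case analysis plus the random-permutation contribution bounds the value of any assignment by $\delta'+O(1/k)$. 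Choosing $\delta'$ and $1/k$ sufficiently small in terms of $\delta$ completes the proof. The main subtlety is to check that the two layers of randomness act independently across edges so that the decomposition into initial instance plus $\eps$-random corruption is faithful to the semi-random model.
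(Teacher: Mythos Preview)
Your argument for $\eps = 1/2$ is essentially the paper's proof. The divergence is in how you handle $\eps > 1/2$.

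The paper keeps the \emph{same} reduction for every $\eps \ge 1/2$: on each edge pick $\pi^0_{uv}$ or $\pi^1_{uv}$ uniformly at random, with no further overwriting. The point you are missing is that the adversary can simulate this distribution for any $\eps \ge 1/2$: given the $\eps$-random set $E_\eps$, for each edge in $E_\eps$ she replaces $\pi^g_{uv}$ by $\pi^b_{uv}$ with probability $1/(2\eps)$ (which is $\le 1$ precisely when $\eps\ge 1/2$) and otherwise leaves the constraint unchanged. The resulting constraint is $\pi^g$ or $\pi^b$ each with probability exactly $1/2$, matching the reduction. Because this reduction never outputs a permutation outside $\{\pi^0_{uv},\pi^1_{uv}\}$, the No case becomes \emph{deterministic}: satisfying the UG constraint implies satisfying the 2-to-2 constraint, so the UG value is at most $\delta'$, always, with no randomness to control.

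Your route for $\eps > 1/2$ instead adds a layer in which constraints are overwritten by uniformly random permutations. This creates two problems. The minor one: the event ``final constraint $\neq \pi^g_{uv}$'' has probability $\eps - (2\eps-1)/k!$, not $\eps$, so your stated coupling is off (a correct coupling exists, but $E_\eps$ cannot be taken to be that event, since the adversary's uniform permutation sometimes equals $\pi^g_{uv}$). The real gap is in the No case: your bound ``$\delta' + O(1/k)$'' is the \emph{expected} random-permutation contribution for a single fixed assignment, but what you must control is the maximum over all $k^n$ assignments. A Chernoff-plus-union-bound can do this, but only when the number of overwritten edges is $\Omega(kn\log k)$; you neither state this density requirement nor justify it. The paper's one-line simulation with flip probability $1/(2\eps)$ sidesteps the issue entirely and keeps the No case trivial.
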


Before we proceed with the proof we remind the reader the definition of 2--to--2 games and 2--to--2 conjecture.
\begin{definition} In a 2--to--2 game, we are given a graph $G=(V,E)$, a set of labels $[k]=\{0,\dots, k-1\}$ ($k$ is even)
and set of constraints, one constraint for every edge $(u,v)$. Each constraint is defined by a 2--to--2 predicate
$\Pi_{uv}$: for every label $i$ there are exactly two labels $j$ such that $\Pi_{uv}(i,j) = 1$ (the predicate is satisfied);
similarly, for every $j$ there are exactly two labels $i$ such that $\Pi_{uv}(i,j) = 1$.
Our goal is to assign a label $x_u\in [k]$ to every vertex $u$ so as to maximize the number of satisfied constraints $\Pi_{uv}(x_u, x_v) = 1$.
The value of the solution is the number of satisfied constraints.
\end{definition}
\begin{definition} The 2--to--2 conjecture states that for every $\delta > 0$ and sufficiently large $k$, there is no polynomial time
algorithm that distinguishes between the following two cases (i) the instance is completely satisfiable and (ii) the instance
is at most $\delta$ satisfiable.
\end{definition}

\begin{proof}
We construct a randomized reduction that maps every satisfiable 2--to--2 game to a semi-random unique game,
and every at most $\delta$ satisfiable 2--to--2 game to an at most $\delta$ satisfiable unique game.
Suppose we are given an instance of a 2--to--2 game on a graph $G$ with predicates $\Pi_{uv}$. For each
predicate $\Pi_{uv}$, we find two permutations $\pi^0_{uv}$ and $\pi^1_{uv}$ such that
$\Pi_{uv}(i, \pi^0_{uv}(i)) = 1$ and $\Pi_{uv}(i, \pi^1_{uv}(i)) = 1$, and $\pi^0_{uv}(i) \neq \pi^1_{uv}(i)$
for every label $i$ as follows. We consider a bipartite graph on vertex set $\{(u,i): i\in[k]\} \cup \{(v,j): j\in [k]\}$,
in which two vertices $(u,i)$ and $(v,j)$ are connected with an edge if $\Pi_{uv}(i, j) = 1$. Every vertex has degree 2 in this graph.
Therefore, it is a union of two matchings; each of them defines a permutation.
Now for every edge $(u,v)$, we choose one of the two permutations, $\pi^0_{uv}$ or $\pi^1_{uv}$ , at random.
We obtain a unique game instance.

First of all it is clear, that the value of the unique game is at most the value of the 2--to--2 game
(if a labeling $\{x_u\}$ satisfies $x_v = \pi^0_{uv}(x_u)$ or $x_v = \pi^1_{uv}(x_u)$ then it also satisfies
$\Pi_{uv}(x_u, x_v)$). Hence our reduction maps an at most $\delta$ satisfiable 2--to--2 game
to an at most $\delta$ satisfiable unique game.
Now suppose that the 2--to--2 game instance is completely satisfiable. We show how an adversary can
generate semi-random instances that have the same distribution as instances generated by our reduction.
The adversary finds a solution $\{x_u^*\}$ to the 2--to--2 game that satisfies all constraints $\Pi_{uv}$. For every edge $(u,v)$,
she chooses $r_{uv}\in \set{0,1}$ such that $\pi^{r_{uv}}_{uv}(x_u^*) = x_v^*$.
She obtains a completely satisfiable unique game on $G$ with constraints
$\pi^{r_{uv}}_{uv}(x_u) = x_v$. Now she performs the random step --- she
chooses a random set of edges $E_{\eps}$; every edge belongs to $E_{\eps}$ with probability $\eps$.
She replaces every constraint $\pi^{r_{uv}}_{uv}(x_u) = x_v$ with the constraint
$\pi^{1 - r_{uv}}_{uv}(x_u) = x_v$ with probability $1/(2\eps)$, and returns
the obtained instance. Note that the constraint for the edge $(u,v)$ is $\pi^{0}_{uv}(x_u) = x_v$ with probability $1/2$
and $\pi^{1}_{uv}(x_u) = x_v$ with probability $1/2$. Therefore, the distribution of instances coincides
with the distribution generated by our reduction.
\end{proof}

Our positive results (for $\eps < 1/3$) apply only to graphs with average degree greater than
$\eps^{-1} \log k\log \log k$. We want to point out that our negative results (for $\eps \geq 1/2$)
also apply to graphs with high average degree. Indeed, note that our reduction does not change the constraint
graph. As the following lemma shows, if the 2--to--2 conjecture is true then it is also true for instances on graphs with very high
average degree (as large as $n^{1-\theta} \gg k\log k$).

\begin{lemma}
Suppose that the 2--to--2 conjecture holds. Then it also holds for graphs
with average degree at least $n^{1 - \theta}$ (for every $\theta > 0$).
\end{lemma}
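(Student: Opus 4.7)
The plan is to give a polynomial-time reduction that takes an arbitrary 2-to-2 game instance $\calI$ and produces a 2-to-2 game instance $\calI'$ on a graph of average degree at least $N^{1-\theta}$ whose value equals the value of $\calI$. Any algorithm that distinguishes completely satisfiable from at most $\delta$ satisfiable 2-to-2 instances on dense graphs could then be applied to $\calI'$ to distinguish these cases for $\calI$, contradicting the 2-to-2 conjecture. The construction I have in mind is a standard vertex blow-up: replace each vertex of $\calI$ by $t$ copies and each edge by the $t\times t$ biclique of copies, keeping the original 2-to-2 predicate on every new edge.

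Formally, given $\calI$ on a graph $G=(V,E)$ with $|V|=n$, alphabet $[k]$, and predicates $\{\Pi_{uv}\}$, and a blow-up factor $t$ to be chosen, I would form $\calI'$ on the graph $G'=(V',E')$ with $V'=V\times[t]$ and $E'=\{((u,a),(v,b)) : (u,v)\in E,\ a,b\in [t]\}$, assigning the predicate $\Pi_{uv}$ to every such edge. Since the predicates are unchanged, $\calI'$ is again a 2-to-2 game on the same alphabet $[k]$.

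Completeness is immediate: any assignment $\{x_u^*\}$ satisfying all constraints of $\calI$ lifts to an assignment $x_{(u,a)}^* := x_u^*$ satisfying all constraints of $\calI'$. For soundness, I would use a randomized-decoding argument. Given any labeling $\{y_{(u,a)}\}$ of $\calI'$ of value $\beta$, draw independent uniform indices $i_u\in[t]$ and set $x_u := y_{(u,i_u)}$. For each edge $(u,v)\in E$, the probability that $\Pi_{uv}(x_u,x_v)=1$ equals exactly the fraction of the $t^2$ copy-edges $((u,a),(v,b))$ satisfied by $\{y_{(u,a)}\}$, so, averaging over edges of $G$, the expected fraction of constraints of $\calI$ satisfied by $\{x_u\}$ is exactly $\beta$. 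Hence some labeling of $\calI$ has value at least $\beta$, which shows the value of $\calI'$ is at most the value of $\calI$, and combined with completeness yields exact value preservation.

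The only step requiring a quantitative check is the density calculation. The graph $G'$ has $N=nt$ vertices and $|E|\,t^2\geq t^2$ edges, so its average degree is at least $2t/n$. Taking $t=n^{2/\theta}$ gives $N=n^{1+2/\theta}$ and average degree at least $2n^{2/\theta-1}\geq N^{1-\theta}$, since $(1+2/\theta)(1-\theta)=2/\theta-1-\theta\leq 2/\theta-1$. The reduction is polynomial in $n$, and beyond this density check the argument is essentially just the standard observation that vertex blow-up preserves the value of a constraint satisfaction problem exactly; the main thing to get right is the choice of $t$ and the verification that the average-degree inequality holds for every $n\geq 1$ and every $\theta>0$.
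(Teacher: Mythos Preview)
Your proposal is correct and follows essentially the same approach as the paper: both use the vertex blow-up (``cloud'') construction with $t=n^{2/\theta}$ copies per vertex, preserve the original 2--to--2 predicate on every copy-edge, argue value preservation via the lift for completeness and randomized decoding for soundness, and verify the same density calculation $2n^{2/\theta-1}\ge N^{1-\theta}$.
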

\begin{proof}
Suppose we are given an instance $\cal I$ of a 2--to--2 game on a graph $G = (V,E)$ with constraints $\Pi_{uv}(x_u, x_v)$.
We construct a new instance $\cal I'$ of a 2--to--2 game as follows.
For every vertex $u$, we create a ``cloud'' of $N = |V|^{2/\theta}$ new vertices $V_u$.
We connect all vertices in $V_u$ with all vertices in $V_v$ for every edge $(u,v)$.
For every edge between vertices $a\in V_u$ and $b\in V_v$, we add the constraint
$\Pi_{uv}(x_a, x_b) =1$ (where $\Pi_{uv}$ is the predicate for the edge $(u,v)$ in $\cal I$).
We obtain a 2--to--2 game on a graph with average degree
$2N^2 |E| /(N |V|) \geq 2N / |V| \geq (N|V|)^{1 -\theta}$.

Clearly, this is a polynomial--time reduction. We show that the values of $\calI$ and $\calI'$ are equal.
If a solution $\{x_u\}$ for $\cal I$ has value $t$ then the solution defined by $x_a = x_u$ if $a\in V_u$ for $\cal I'$
has also value $t$. On the other hand, given a solution $\{x_a\}$ for $\cal I'$ of value $t$, we construct a solution
for $\cal I$ as follows: for every $u$, we choose a random vertex $a$ from $V_u$ and let $x_u = x_a$.
Then the expected value of this solution is $t$. Therefore, the value of the optimal solution for $\calI$ is at least $t$.
\end{proof}

\section{Adversarial Edges, Random Constraints}

\begin{theorem}\label{thm:main2}
There exists a polynomial-time approximation algorithm that given
$k \in \bbN$ ($k\geq k_0$), $\varepsilon \in (0,1)$, $\eta \in (c \sqrt{\varepsilon\log k /k}, 1)$
and a semi-random instance of unique games from the ``adversarial edges,  random constraints''
model on
graph $G=(V,E)$ with at least $C\eta^{-2} n\log k$ edges ($C$ is
a sufficiently large absolute constant)
finds a solution of value $1 - O(\eps + \eta)$, with probability $1-o(1)$.
\end{theorem}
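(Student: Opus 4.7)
The plan is to follow the two-stage filter-then-round strategy sketched in the introduction. First I would solve the standard SDP relaxation on the given semi-random instance; since the instance is $(1-\varepsilon)$-satisfiable, the optimal SDP value is at most $\varepsilon$. The crucial definition is that of a \emph{long} edge: an edge $(u,v)\in E$ is long if its SDP contribution $\tfrac{1}{2}\sum_{i\in[k]}\|u_i - v_{\pi_{uv}(i)}\|^2$ exceeds a suitable constant threshold $\tau$. By Markov's inequality applied to the SDP objective, at most $\tau^{-1}\varepsilon |E| = O(\varepsilon |E|)$ edges can be long.

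The content of the missing Theorem~\ref{thm:longedges} goes in the opposite direction: with probability $1-o(1)$ over the random constraints, all but an $O(\eta^2/\log k)$-fraction of the $\varepsilon |E|$ corrupted edges must be long. Granting this, the second stage is essentially mechanical. I would delete the set $L$ of long edges from $G$ to obtain a reduced instance on $G' = (V, E\setminus L)$. Because almost every corrupted edge sits in $L$, the original planted assignment leaves only $O(\eta^2 |E|/\log k)$ constraints of $G'$ unsatisfied, so the optimum of the standard SDP on $G'$ is at most $O(\eta^2/\log k)$. Running the CMMa rounding algorithm of~\cite{CMM1} on the reduced SDP then produces an integer solution satisfying a $1 - O(\sqrt{(\eta^2/\log k)\cdot\log k}) = 1 - O(\eta)$ fraction of the reduced instance. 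Charging the $O(\varepsilon |E|)$ discarded long edges as unsatisfied, the total value on the original instance is $1 - O(\varepsilon+\eta)$, as claimed.

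The main obstacle is proving Theorem~\ref{thm:longedges}. The underlying heuristic is transparent: for any fixed unit vectors $\{u_i\},\{v_j\}$, a uniformly random permutation $\pi$ satisfies $\E_{\pi}\sum_i\langle u_i, v_{\pi(i)}\rangle = \tfrac{1}{k}\langle\sum_i u_i,\sum_j v_j\rangle$, which, for solutions that spread mass across labels, is far smaller than $1$; hence the SDP contribution is typically bounded away from $0$ and the edge is forced to be long. The subtlety is that the SDP solution is chosen \emph{after} the random constraints are revealed, so this concentration has to hold uniformly over all possible SDP solutions. The natural approach is a Johnson--Lindenstrauss-style dimension reduction together with a small net over unit vectors (analogous to Lemma~\ref{lem:JL2}), yielding at most $\exp(O(n\log k))$ representative solutions to union-bound over. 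The density condition $|E|\geq C\eta^{-2} n\log k$ is tuned precisely so that a Hoeffding bound for a single representative, with deviation of order $\eta^2$ per corrupted edge, comfortably beats this union bound, while the lower bound $\eta\geq c\sqrt{\varepsilon\log k/k}$ guarantees that the $O(1/k)$-scale discretization error introduced by the net is negligible compared to the target threshold $\eta^2/\log k$.
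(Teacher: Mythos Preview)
Your two-stage framework is exactly the paper's: solve the standard SDP, delete the $\tau$-long edges (the paper uses $\tau=1/16$), re-solve the SDP on the reduced graph, and apply CMMa. The accounting you give --- at most $O(\varepsilon)|E|$ long edges deleted, at most $\gamma|E|$ with $\gamma=\eta^2/\log k$ corrupted edges surviving, hence CMMa yields $1-O(\eta)$ on the reduced instance --- is precisely what the paper does.

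Your sketch of Theorem~\ref{thm:longedges}, however, glosses over the step that actually makes the union bound close. You assert $\exp(O(n\log k))$ representative SDP solutions after JL plus a net, but the SDP has $nk$ vectors $\{u_i\}_{u\in V,\,i\in[k]}$ of \emph{varying} lengths (they are not unit vectors; only $\sum_i\|u_i\|^2=1$ holds), so a naive net gives $\exp(\Omega(nk))$ representatives, which the concentration cannot beat. The paper handles this with an additional idea you did not mention: it buckets the label-extended vertices by $\lfloor\log_2\|u_i\|^2\rfloor$ (with a random offset), argues that one bucket $V_{t,r}$ of size $W$ already carries enough of the ``short-edge mass,'' normalizes the vectors in that bucket to unit length, and only then applies JL and the net. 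The resulting count is $\exp(O(W\log k + n))$, and the concentration bound is simultaneously strengthened to $\exp(-\Omega(\gamma W\Delta\log k + \gamma|E|\log k))$; the two $W$-dependent terms balance, and the edge-density hypothesis $|E|\ge C\gamma^{-1}n$ finishes the job. A second smaller point: the indicators $f(u,i,v)$ for a single random permutation $\pi_{uv}$ are not independent across $i$, so the paper uses a martingale inequality (revealing values of $\pi_{uv}$ one label at a time) rather than Hoeffding.
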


Our algorithm proceeds in several steps. First, it solves the standard SDP relaxation for Unique Games.
Then it removes ``$\eta$--long edges'' (see below) with respect to the SDP solution, and finally
it runs the CMMa~\cite{CMM1} algorithm to solve the unique game on the remaining graph (the CMMa
algorithm will again solve the SDP relaxation for Unique Games --- it cannot reuse our SDP solution).

\begin{definition}
We say that an edge $(u,v)$ is $\eta$--long with respect to an SDP solution $\{u_i\}$,
if
$$\frac{1}{2}\sum_{i\in [k]} \|u_i - v_{\pi_{uv}(i)}\|^2 > \eta.$$
Otherwise, we say that the edge $(u,v)$ is $\eta$--short.
\end{definition}
Now we formally present the algorithm.

\rule{0pt}{12pt}
\hrule height 0.8pt
\rule{0pt}{1pt}
\hrule height 0.4pt
\rule{0pt}{6pt}

\noindent \textbf{Input:} An instance of unique games.

\noindent \textbf{Output:} An assignment of labels to the vertices.
\begin{enumerate}\compactify
\item Solve the SDP and obtain an SDP solution $\{u_i^*\}$.
\item Remove all $1/16$--long (with respect to $\{u_i^*\}$) edges $(u,v)\in E$ from the graph $G$. Denote the new graph
$G^*$.
\item Solve the SDP on the graph $G^*$ and run the CMMa algorithm.
\end{enumerate}

\rule{0pt}{1pt}
\hrule height 0.4pt
\rule{0pt}{1pt}
\hrule height 0.8pt
\rule{0pt}{12pt}

In Theorem~\ref{thm:longedges}, we show that after removing all $1/16$--long edges
from the graph $G$, the unique games instance contains at most $O(\gamma)|E|$ corrupted
constraint w.h.p, where $\gamma = \eta^2/\log k$.
Since the value of the optimal SDP is at least $\varepsilon$, the algorithm removes at most
$16\varepsilon$ edges at step 2. In the remaining graph, $G'$, the CMMa algorithm finds
an assignment satisfying $1 - O(\sqrt{\gamma \log k}) = 1- O(\eta)$ fraction of all constraints. This assignment satisfies at least $1-O(\varepsilon + \eta)$ fraction of all constraints in $G$.

\begin{remark}
In the previous section we proved that a typical instance of unique games
in the ``random edges, adversarial constraints'' model contains many ``super short'' edges of the label-extended
graph. Then we showed how we can find an assignment satisfying many super short edges. Note, that edges in the set $E_{\varepsilon}$ are not necessarily short or long. In this section, we show something very different: in the typical instance of unique games
in the ``adversarial edges,  random constraints'' model, most edges in the set
$E_{\varepsilon}$ are long. However, note that the label-extended graph does not have to have any super short edges at all.
\end{remark}

\subsection{Almost All Corrupted Edges are Long}

\begin{theorem}\label{thm:longedges}
Let $k \in \bbN$ ($k\geq k_0$), $\varepsilon \in (0,1]$, $\gamma \in (c\varepsilon/\sqrt{k}, 1/\log k)$. Consider
a graph $G=(V,E)$ with at least $C \gamma^{-1} n$ edges
and a unique game instance on $G$ ($c$, $C$, and $k_0$ are absolute constants).
Suppose that all constraints for edges in $E_{\eps}$ are chosen at random;
where $E_{\eps}$ is a set of edges of size $\eps|E|$. Then, the set $E_{\varepsilon}$ contains
less than $\gamma |E|$ $1/16$--short edges w.r.t.  every SDP solution $\{u_i\}$
with probability $1-o(1)$.
\end{theorem}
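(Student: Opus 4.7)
The plan is: (i) for a fixed SDP solution, bound the per-edge shortness probability under the random constraint; (ii) apply a Chernoff bound across the independent edges of $E_\eps$; (iii) promote this to a statement about every SDP solution via a Johnson--Lindenstrauss dimension reduction plus net argument, in the spirit of Lemma~\ref{lem:JL2}.

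Fix any SDP solution $\{u_i\}$. Because the SDP constraints imply $\sum_i \|u_i\|^2 = 1$ and $\langle u_i, u_j\rangle = 0$ for $i \neq j$ (so $\|\sum_i u_i\| = 1$, and likewise for the $v_j$'s), we have the identity $\tfrac{1}{2}\sum_i \|u_i - v_{\pi(i)}\|^2 = 1 - \sum_i \langle u_i, v_{\pi(i)}\rangle$; hence $(u,v)$ is $1/16$-short iff $\sum_i \langle u_i, v_{\pi_{uv}(i)}\rangle \geq 15/16$. For a uniformly random $\pi_{uv}$, Cauchy--Schwarz and orthogonality give $\E_\pi \bigl[\sum_i \langle u_i, v_{\pi(i)}\rangle\bigr] = \tfrac{1}{k}\langle \sum_i u_i, \sum_j v_j\rangle \leq 1/k$, so Markov bounds the per-edge shortness probability by $16/(15k)$. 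Since $\{\pi_{uv}\}_{(u,v)\in E_\eps}$ are mutually independent, for any fixed $\{u_i\}$ the number of short corrupted edges is dominated by $\mathrm{Bin}(\eps|E|,16/(15k))$ with mean $\mu = O(\eps|E|/k)$. The hypothesis $\gamma \geq c\eps/\sqrt{k}$ makes $\gamma|E|/\mu = \Omega(\sqrt{k})$, so the Chernoff tail $\Pr[X \geq t] \leq (e\mu/t)^t$ yields $\Pr[X \geq \gamma|E|] \leq (e/(c\sqrt{k}))^{\gamma|E|} = \exp(-\Omega(\gamma|E|\log k))$, which via $|E|\geq C\gamma^{-1}n$ becomes $\exp(-\Omega(Cn\log k))$.

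The hardest step, and the one that mirrors the dimension-reduction argument inside Lemma~\ref{lem:onelay}, is passing from ``any fixed SDP solution'' to ``every SDP solution''. I would apply Lemma~\ref{lem:JL2} to project the SDP vectors into a low-dimensional JL net that approximately preserves the inner products $\langle u_i,v_j\rangle$, observe that the shortness event is robust to inner-product perturbations of size $O(1/\sqrt{k})$, and then union-bound the per-configuration Chernoff estimate over the resulting finite family of rounded configurations. The main obstacle is that the naive count of rounded configurations, of order $\exp(O(nk\log(nk)))$, is too large to be absorbed by the per-configuration failure probability above. To close this gap I would sharpen the per-edge Markov bound via a $p$-th moment computation on $X=\sum_i\langle u_i,v_{\pi(i)}\rangle$: for a uniform permutation, the joint probability that $\pi$ sends $p$ prescribed sources to any $p$ prescribed distinct targets is $O(1/k)^p$ for $p \lesssim k$, and combining this with $\sum_i\|u_i\|\leq\sqrt{k}$ (Cauchy--Schwarz) yields $\E X^p \leq (O(1/\sqrt{k}))^p$; optimizing $p\approx k$ gives the much sharper per-edge bound $\exp(-\Omega(k\log k))$. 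The resulting per-configuration failure probability $\exp(-\Omega(nk\log k))$ comfortably beats the JL-net count and completes the union bound.
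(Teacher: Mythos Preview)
Your first two steps are sound: the identity $\tfrac{1}{2}\sum_i\|u_i-v_{\pi(i)}\|^2=1-\sum_i\langle u_i,v_{\pi(i)}\rangle$, the expectation $\tfrac{1}{k}\langle\sum_i u_i,\sum_j v_j\rangle\le 1/k$, and the resulting Chernoff bound $\exp(-\Omega(Cn\log k))$ for a fixed solution are all correct. The trouble is the last step, and your proposed fix does not work.

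The sharpened per-edge bound $\Pr_\pi[\text{$1/16$-short}]\le\exp(-\Omega(k\log k))$ is simply false. Take the integral solution $u_0=v_0=e$, $u_i=v_j=0$ otherwise. Then $X=\sum_i\langle u_i,v_{\pi(i)}\rangle$ equals $1$ when $\pi(0)=0$ and $0$ otherwise, so $\Pr[X\ge 15/16]=1/k$ exactly. Your moment calculation breaks because the expansion of $X^p$ contains terms with repeated indices: in this example the single term $i_1=\cdots=i_p=0$ already contributes $\E[\langle u_0,v_{\pi(0)}\rangle^p]=1/k$, which is much larger than $(O(1/\sqrt{k}))^p$ for $p\approx k$. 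So $1/k$ is the best uniform per-edge bound, $\exp(-\Omega(Cn\log k))$ is the best per-configuration bound, and no net of size $\exp(\Theta(nk))$ can be beaten this way.

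The paper's proof gets around exactly this obstacle by abandoning the edge-level viewpoint. It works with the indicators $f_\alpha(u,i,v)=\mathbf{1}\{\|u_i-v_{\pi_{uv}(i)}\|^2<\alpha(\|u_i\|^2+\|v_{\pi_{uv}(i)}\|^2)\}$ at the level of individual pairs $(u,i)$, buckets the $(u,i)$'s by $\lfloor\log_2\|u_i\|^2\rfloor$, and normalizes within a single bucket so that all surviving vectors are unit and (after JL) nearly orthogonal. If $W$ is the number of surviving pairs, the net now has only $\exp(O(W\log k+n))$ configurations, while the right-hand side of the key inequality carries a term $\gamma W\Delta$ which, using $\Delta\ge C\gamma^{-1}$, makes the martingale tail $\exp(-\Omega(W\log k+n))$ as well. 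The point is that the configuration count and the failure probability both scale with the same parameter $W$, so they match for every SDP solution, concentrated or spread out; your approach tries to make a single per-edge bound work uniformly, and the concentrated solutions prevent that.
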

\begin{proof}
Consider a semi-random instance of unique games. Let $\{u_i\}$ be an SDP solution.
Suppose, that at least $\gamma|E|$ edges in $E_{\varepsilon}$ is $1/16$--short.
Define,
$$f_{\alpha,\{u_i\}}(u,i,v) =
\begin{cases}
1,& \text{if } \|u_i- v_{\pi_{uv}(i)}\|^2 < \alpha (\|u_i\|^2 + \|v_{\pi_{uv}(i)}\|^2);\\
0,&\text{otherwise}.
\end{cases}
$$
We shall omit the second index of $f$, when it is clear that we are talking about the set
$\{u_i\}$. For every short $1/16$--short edge $(u,v)\in E_{\varepsilon}$, we have
\begin{equation}\label{eq:f-sum1}
\frac{1}{2}\sum_{\substack{i\in [k]\\j = \pi_{uv} (i)}} \frac{\|u_i\|^2 + \|v_j\|^2}{8} \times(1- f_{1/8} (u,i,v))\leq
\frac{1}{2}\sum_{\substack{i\in [k]\\j = \pi_{uv} (i)}} \|u_i - v_j\|^2 \leq \frac{1}{16},
\end{equation}
and, hence,
\begin{equation}\label{eq:f-sum2}
\frac{1}{2}\sum_{\substack{i\in [k]\\j = \pi_{uv} (i)}} (\|u_i\|^2 + \|v_j\|^2) f_{1/8} (u,i,v)
\geq
\frac{1}{2}\sum_{\substack{i\in [k]\\j = \pi_{uv} (i)}} (\|u_i\|^2 + \|v_j\|^2) - \frac{1}{2} = \frac{1}{2}.
\end{equation}
We get (from (\ref{eq:f-sum2}) and the
 assumption that there are at least $\gamma |E|$  $1/16$--short edges in $E_{\varepsilon}$)
\begin{eqnarray*}
\sum_{(u,i) \in V\times [k]}\sum_{v:(u,v)\in E_{\varepsilon}}
\|u_i\|^2 f_{1/8}(u,i,v)&=&
\sum_{(u,v)\in E_{\varepsilon}}\sum_{\substack{i\in [k]\\j = \pi_{uv} (i)}} (\|u_i\|^2 + \|v_j\|^2) f_{1/8} (u,i,v)\geq \gamma |E|.\\
&=& \frac{\gamma}{2} |E| +
\frac{\gamma}{8} \sum_{(u,i) \in V\times [k]} \|u_i\|^2
(\varepsilon^{-1}\deg_{E_{\varepsilon}}(u)+ \Delta).
\end{eqnarray*}
Here $\deg_{E_{\varepsilon}}(u)$ denotes the number of edges in $E_{\varepsilon}$ incident to $u$; and $\Delta$ denotes the average degree in the graph $G$. We used that $\sum_{i\in[k]}\|u_i\|^2 = 1$.
We now group all vertices of the label--extended graph $(u,i)\in E\times[k]$  depending on the value of $\log_2 (\|u_i\|^2)$. Specifically, we pick a random $r\in (1,2)$ distributed with density $(x \ln 2)^{-1}$
and for
every $t\in \bbN$, define
$$V_{t,r} = \{(u,i)\in V\times[k]: \|u_i\|^2 \in [r2^{-(t+1)}, r 2^{-t}]\}.$$
We define $V_{\infty,r} = \{(u,i)\in V\times[k]: \|u_i\|^2=0\}$.
Let $g_r(u,i,v) = 1$, if $(u,i)$ and $(v, \pi_{uv}(i))$ lie in the same group $V_{t,r}$; and
$g_r(u,i,v) = 0$, otherwise. We claim, that if
$f_{1/8}(u,i,v) = 1$, then $\Pr_r(g_r(u,i,v) = 1) \geq 1/2$.

\begin{claim}
If $f_{1/8}(u,i,v) = 1$, then $\Pr_r(g_r(u,i,v) = 1) \geq 1/2$.
\end{claim}
\begin{proof}
Denote $j= \pi_{uv}(i)$. If $f_{1/8}(u,i,v) = 1$, then $u_i\neq 0$ and $v_j\neq 0$. Assume
w.l.o.g. that $\|u_i\|^2\geq \|v_j\|^2$. Note, that $g_r(u,i,v)=0$, if and only if
$r2^{-t}\in [\|v_j\|^2,\|u_i\|^2]$ for some $t\in \bbN$. The probability of
this event is bounded by $\ln(\|u_i\|^2/\|v_j\|^2)/\ln 2$, since the probability
that an interval $[x,x+\Delta x]$ contains a point from $\{r2^{-t} :t\in \bbN\}$
is $\Delta x/(x\ln(2)) + o(1)$. Using the inequality
$$\|u_i\|^2 - \|v_j\|^2 \leq
 \|u_i - v_j\|^2 \leq \frac{\|u_i\|^2 + \|v_j\|^2}{8},$$
we get $\|u_i\|^2/\|v_j\|^2 \leq 9/7$. Thus,
$$\Pr_r(g_r(u,i,v) = 0) \leq \ln(9/7)/\ln 2 < 1/2.$$
\end{proof}

\noindent Therefore, for some $r\in[1,2]$,
$$\sum_{(u,i) \in V\times[k]}\sum_{v:(u,v)\in E_{\varepsilon}}
\|u_i\|^2 f_{1/8}(u,i,v)g_r(u,i,v)
\geq \frac{\gamma}{4} |E| +
\frac{\gamma}{16}
\sum_{(u,i) \in V\times[k]} \|u_i\|^2(\varepsilon^{-1}\deg_{E_{\varepsilon}}(u)+  \Delta).$$
Rewrite this inequality as follows
\begin{multline*}
\sum_{t\in \bbN}\sum_{(u,i) \in V_{t,r}\times[k]}\sum_{v:(u,v)\in E_{\varepsilon}}
\|u_i\|^2 f_{1/8}(u,i,v)g_r(u,i,v)
\geq\\ \sum_{t\in \bbN}2^{-(t+1)} \frac{\gamma}{4} |E| +
\sum_{t\in \bbN}
\sum_{(u,i) \in V_{t,r}} \|u_i\|^2(\varepsilon^{-1}\deg_{E_{\varepsilon}}(u)+  \Delta).
\end{multline*}
For some $t\in \bbN$,
$$\sum_{(u,i) \in V_{t,r}}\sum_{v:(u,v)\in E_{\varepsilon}}
\|u_i\|^2 f_{1/8}(u,i,v)g_r(u,i,v)
\geq \frac{2^{-(t+1)}\gamma|E|}{4} +
\frac{\gamma}{16}
\sum_{(u,i) \in V_{t,r}}\|u_i\|^2(\varepsilon^{-1}\deg_{E_{\varepsilon}}(u)+  \Delta).$$
We now replace each term $\|u_i\|^2$  in the left hand side with the upper bound $r 2^{-t}$ and each  term
$\|u_i\|^2$  in the right hand side with the lower bound $r 2^{-(t+1)}$. Then, we divide both sides
by $r 2^{-t}\leq 2^{-t}$.
$$\sum_{(u,i) \in V_{t,r}}\sum_{v:(u,v)\in E_{\varepsilon}}
f_{1/8}(u,i,v)g_r(u,i,v)
\geq \frac{\gamma}{16}|E| + \frac{\gamma}{32} \sum_{(u,i) \in V_{t,r}}\varepsilon^{-1}\deg_{E_{\varepsilon}}(u)+ \frac{\gamma}{32} |V_{t,r}|\Delta.$$
Define a new set of vectors $u^*_i = u_i/\|u_i\|$, if $(u,i)\in V_{t,r}$; $u^*_i = 0$, otherwise.
Observe, that
if $f_{\alpha}(u,i,v)g_r(u,i,v) = 1$, then $f_{\alpha,\{u^*_i\}}(u,i,v) = 1$ since for $j=\pi_{uv}(i)$,
\begin{eqnarray*}
\|u^*_i - v^*_j\|^2 &=& 2 - 2 \langle u^*_i, v^*_j \rangle =
2 - 2 \frac{\langle u_i, v_j\rangle}{\|u_i\|\;\|v_j\|}  = 2 - \frac{\|u_i\|^2 + \|v_j\|^2 - \|u_i - v_j\|^2}{\|u_i\|\;\|v_j\|}\\
&< & 2 - (1-\alpha) \frac{\|u_i\|^2 + \|v_j\|^2}{\|u_i\|\;\|v_j\|}\leq 2 - (1-\alpha) \cdot 2 = 2\alpha
= \alpha (\|u^*_i\|^2 + \|v^*_j\|^2).
\end{eqnarray*}
Therefore,
\begin{equation}
\sum_{(u,i)\in V_{t,r}}\sum_{v:(u,v)\in E_{\varepsilon}}f_{1/8,\{u_i^*\}}(u,i,v)
\geq \frac{\gamma}{16}|E| + \frac{\gamma}{32}
\sum_{(u,i) \in V_{t,r}} \varepsilon^{-1}\deg_{E_{\varepsilon}}(u) + \frac{\gamma}{32}|V_{t,r}|\Delta.
\end{equation}
We now embed vectors $\{u^*_i\}$ in a net $N$ of size $\exp(O(\log k))$
using a randomized mapping $\varphi$ (see Lemma~\ref{lem:JL2}, a variant of the Johnson--Lindenstrauss lemma), so
that for some small absolute constant $\beta$ and every $(u,i),(v,j)\in V_{t,r}$,
$$
\Pr ((1+\beta)^{-1} \|u-v\|^2 - \beta \leq \|\varphi (u) - \varphi (v)\|^2 \leq (1+\beta)\|u-v\|^2 + \beta) \geq 1 - \frac{\beta}{k^2}.
$$
We say that a vertex $u\in V$ is good if for all $i',i''\in \{i: (u,i)\in V_{t,r}\}$,
the following inequality holds: $\|\varphi(u_{i'}^*)- \varphi(u_{i''}^*) \|^2 \geq 2 - 3\beta$ for $i'\neq i''$.
I.e., if $u$ is good then all vectors $\varphi({u_i}^*)$ are almost orthogonal. By Lemma~\ref{lem:JL2},
vertex $u$ is good with probability at least $1-\beta$. We let $u^{**}_{i} = \varphi(u^{*}_{i})$, if $u$ is good; and  $u^{**}_{i} = 0$, otherwise.  To slightly simplify the proof,
for every vertex $u$, we also zero out a random subset of $\roundup{\beta k/4}$ vectors $u^{**}_i$. Thus,
if $f_{1/8, \set{u_i^*}}(u,i,v)=1$, then $\Pr(f_{1/4, \set{u_i^{**}}}(u,i,v) = 1)\geq 1 - 4\beta$
(unless $u$ or $v$ are not good vertices; $u_i$ or $v_{\pi_{uv}(i)}$ has been zeroed; or the distance between $u_i^*$ and $v_{\pi_{uv}(i)}^*$
is significantly distorted by $\phi$).
Hence, for some sample $\{u^{**}_{i}\}$,
\begin{equation}\label{eq:goodform}
R \equiv \sum_{(u,i): \|u_i^{**}\|=1}\sum_{v:(u,v)\in E_{\varepsilon}}
f_{1/4,\{u_i^{**}\}}(u,i,v)
\geq \frac{\gamma}{64}\Big(
|E| + \sum_{(u,i) \in V_{t,r}} \varepsilon^{-1}\deg_{E_{\varepsilon}}(u)
+ |V_{t,r}| \Delta \Big)\equiv S.
\end{equation}
We denote the left hand side by $R$. We denote the right hand side by $S$.

We now fix the set of vectors $\{u_i^{**}\}$ and estimate the probability that for a given
set $\{u_i^{**}\}$ and a random set of constraints $\{\pi_{uv}\}$ on edges $(u,v)\in E_{\varepsilon}$
inequality~(\ref{eq:goodform}) holds.
For each $(u,v)\in E_{\varepsilon}$ and $i$,
$\Pr_{\set{\pi_{uv}}}(f_{1/4,\{u_i^{**}\}}(u,i,v) = 1) \leq 1/k$ (since for every $i$, there is at most one
$j\in[k]$ such that $\|u^{**}_i-v^{**}_j\|^2 \leq 1/4$; here we use that all non-zero vectors $v_j$ are
almost orthogonal). Thus,
$$
\E R = \E\sum_{(u,i): \|u_i^{**}\|^2=1}\sum_{v:(u,v)\in E_{\varepsilon}}f_{1/4,\set{u_i^{**}}}(u,i,v)
\leq \frac{1}{k}\sum_{(u,i): \|u_i^{**}\|^2=1} \deg_{E_{\varepsilon}}(u) < 64 S/(\sqrt{k}c).
$$
In the last inequality, we used the bound $\gamma \geq c\varepsilon/\sqrt{k}$ and (\ref{eq:goodform}).
We assume that $64/c\ll 1$.
We would like to apply the Bernstein inequality, which would give us an upper bound
of
\begin{equation}\label{ineq:want}
\exp(-\Omega(S)) \equiv \exp\Big(-\Omega\Big(\big(\gamma |E| + \gamma \sum_{(u,i) \in V_{t,r}} \varepsilon^{-1}\deg_{E_{\varepsilon}}(u) + \gamma |V_{t,r}| \Delta\big)\log (k) \Big)
\end{equation}
on the probability that inequality~(\ref{eq:goodform}) holds (note: $\E[R] \ll S$;
$\log (S/\E[R])\geq \Theta(\log k)$). Formally,
we cannot use this inequality, since
random variables $f_{1/4, \set{u_i^{**}}}(u,i,v)$ are not independent. So, instead,
we consider a random process, where we consequently pick edges
$(u,v)\in E_{\varepsilon}$ and then for every $i$ such that $u^{**}_i \neq 0$
pick a random yet unchosen $j\in[k]$ and set $\pi_{uv}(i) = j$. For every $u$ and $i$,
there are at least $\roundup{\beta k/4}$ different candidates $j$ (since at least $\roundup{\beta k/4}$
$u_i$'s equal $0$) and for at most one of them
$\|u^{**}_i-v^{**}_j\|^2 \leq 1/4$. Hence, at every step
$\Pr(f_{1/4, \set{u_i^{**}}}(u,i,v)=1)\leq 4/(\beta k)$.
We now apply a martingale concentration inequality and get the
same bound~(\ref{ineq:want}).

Finally, we bound the number of possible solutions $\{u_i^{**}\}$ and then
apply the union bound to show that w.h.p. the set $E_{\varepsilon}$ contains less than $\gamma |E|$ 1/16-short
edges with respect to every SDP solution.
Let $W=|V_{t,r}|$ be the number of pairs $(u, i)$ such that $u_i^{**} \neq 0$.
We choose $W$ variables among $\{u_i^{**}\}_{u,i}$ and assign them values from $N$ as follows.
Let $a_u= |\{u_i^{**}\neq 0: i\}|$. First, we choose the values of $a_u$ for all vertices
(note that $\sum_{u\in V} a_u = W$).
The number of ways to choose $\{a_u\}$ is at most the number of ways to write $W$ as the sum of
$n$ numbers, which is $\binom{n+W-1}{n-1} < 2^{n + W}$.
Then for each vertex $u$, we choose $a_u$ labels ${i_1}, \dots, {i_{a_u}}$ for which
$u_{i_1}^{**} \neq 0, \dots, u_{i_{a_u}}^{**} \neq 0$. The number of ways to choose labels is at most $\prod_{u\in V} k^{a_u} = k^{W}$.
Finally, we assign a vector from $N$ to each chosen variable $u_i^{**}$.  The number of ways
to do so is $|N|^W$. Thus there are at most
$$2^{n+W} \times |N|^W\times k^W = \exp(O(W \log k + n))$$
ways to choose vectors $\{u^{**}_{i}\}$ for given $W$.
Since $\gamma |V_{t,r}| \Delta\log k \geq \gamma W \times (C\gamma^{-1})
\log k\geq CW\log k$ and $\gamma |E|\log k \geq Cn$ inequality~(\ref{eq:goodform})
holds for some collection $\{u^{**}\}$ with exponentially small probability $\exp(-\Omega(n))$ for every fixed value
of $W$. Since there are only $n$ possible values on $W$, inequality~(\ref{eq:goodform}) holds for some value of $W$
with exponentially small probability.
\end{proof}

\section{Random Corrupted Constraints in Linear Unique Games}\label{sec:linGames}

In this section we study a special family of unique games, so-called Linear Unique Games (or
MAX $\Gamma$-Lin). In Linear Unique Games all labels $i\in [k]$ are elements of the cyclic
group $\bbZ/k\bbZ$; all constraints have the form $x_u = x_v - s_{uv}$,
where $s_{uv}\in \bbZ/k\bbZ$. It is known that if the Unique Games Conjecture
holds for general unique games, then it also holds for this family. We note that
the results from the previous subsection are not directly applicable for
Linear Unique Games, since random permutations are restricted to be shifts
$\pi_{uv}: x_u \mapsto x_v + s_{uv}$ and, thus, a random unique game
from the ``adversarial edges, random constraints'' model is most likely not
a linear unique game. This is not the case in the ``random edges, adversarial constraints'' model,
since there the adversary may choose
corrupted constraints in any fashion she wants, particularly to be shifts.
We note that the results of this section can be easily generalized to any
finite Abelian group using the decomposition of the group into a sum of
cyclic subgroups. We omit the details from the conference version of the paper.

For Linear Unique Games, we use a shift invariant variant of the SDP relaxation (see e.g., a paper by Andersson, Engebretsen, and H{\aa}stad~\cite{AEH}).
In this relaxation, all vectors $u_i$ are unit vectors, satisfying $\ell_2^2$ triangle
inequalities, and an extra constraint $\|u_i-v_j\| = \|u_{i+s}-v_{j+s}\|$. The objective is to minimize
$$ \frac{1}{k} \times \frac{1}{2|E|} \sum_{(u,v)\in E}\sum_{i\in [k]}\|u_i-v_{i+s_{uv}}\|^2.$$
Given an arbitrary SDP solution $\{u_i\}$ one can obtain a uniform
solution using symmetrization (this trick works only for unique games on groups, not
arbitrary unique games):
$$u'_i = \frac{1}{\sqrt{k}}\bigoplus_{j\in \bbZ/k\bbZ} u_{i + j}.$$
It is easy to verify (and it is well-known, see e.g.,~\cite{CMM1}) that vectors $\{u'_i\}$ satisfy
all SDP constraints and the objective value of the SDP does not change. We adapt the definition of $\eta$--short edges for this variant of SDP as follows: we say that an edge $(u,v)$
is $\eta$--short with respect to the SDP solution $\{u_i\}$ if $1/2\;\|u_i - v_{i+s}\|^2\leq \eta$ for every $i\in [k]$ (or, equivalently, for some $i\in [k]$, since $\|u_i - v_{i+s}\| = \|u_j - v_{j+s}\|$).

We prove the following analog of Theorem~\ref{thm:longedges}, which immediately implies
that our general algorithm for the adversarial edges, random constraints model works
for linear unique games.

\begin{theorem}\label{thm:longedgesLin}
Let $k \in \bbN$ ($k\geq k_0$), $\varepsilon \in (0,1)$, $\eta \in (\varepsilon/\log k,1/\log k)$. Consider
a graph $G=(V,E)$ with at least $C n\gamma^{-1}$ edges
($k_0$, $c$, $C$ are absolute constants) and a random
linear unique game from the ``adversarial edges,  random constraints'' model.
Let $E_{\varepsilon}$ be the set of corrupted edges; and let $\{u_i\}$ be the optimal
SDP solution.  Then, with probability $1-o(1)$, at most $\gamma|E|$ edges in $E_{\varepsilon}$ are  $1/32$--short.
\end{theorem}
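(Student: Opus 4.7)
The plan is to follow the strategy of the proof of Theorem~\ref{thm:longedges} while exploiting the extra structure of the shift-invariant SDP used for Linear Unique Games. The key simplification is that all SDP vectors are unit and the vectors within each vertex's block form an exact orthonormal frame, so the ``grouping by norm'' step from the general proof becomes trivial (everything lives in a single group $V_{t,r} = V \times [k]$, and vectors need not be rescaled). The key new twist is that the randomness per corrupted edge is restricted to a uniform shift in $\bbZ/k\bbZ$ rather than a uniform permutation in $S_k$, so the short-event indicators $f_{1/4}(u,i,v)$ for different $i$'s attached to the same edge are no longer roughly independent; by shift invariance they are in fact all equal. This forces us to work directly at the edge level.

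The heart of the argument is a sharper per-edge short probability. Fix any SDP solution $\{u_i\}$ and any edge $(u,v) \in E_\varepsilon$. By shift invariance $\|u_i - v_{i+s_{uv}}\|^2$ is independent of $i$, and since all vectors are unit, the edge is $1/32$-short iff $\|u_0 - v_{s_{uv}}\|^2 \leq 1/16$, equivalently $\langle u_0, v_{s_{uv}}\rangle \geq 31/32$. The vectors $v_0,\ldots,v_{k-1}$ form an orthonormal system, so by Parseval $\sum_{s} \langle u_0, v_s\rangle^2 \leq \|u_0\|^2 = 1$, which forces at most one shift $s^*(u,v) \in \bbZ/k\bbZ$ to satisfy $\langle u_0, v_s\rangle \geq 31/32$. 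Hence, for a uniformly random $s_{uv}$, the edge is $1/32$-short with probability at most $1/k$, and conditional on a fixed SDP solution the shortness indicators across distinct edges of $E_\varepsilon$ are independent.

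A standard Chernoff bound then yields, for a fixed solution, $\Pr[\#\{\text{short corrupted edges}\} > \gamma|E|] \leq (e\varepsilon/(\gamma k))^{\gamma|E|}$, which under the parameter regime $\eta \in (\varepsilon/\log k, 1/\log k)$ becomes $\exp(-\Omega(\gamma|E|\log k))$. To promote this to a statement about every SDP solution, we invoke the dimension reduction of Lemma~\ref{lem:JL2}, projecting the SDP vectors into $\bbR^D$ with $D = O(\log k)$ and discretizing via a $\delta$-net on the unit sphere, and then apply a union bound following the counting strategy of the proof of Theorem~\ref{thm:longedges}. The main obstacle is bounding the net cardinality tightly enough: a naive count in which one specifies an entire orthonormal frame of $k$ vectors per vertex gives $\exp(O(nk\log k))$, which does not mesh with the concentration bound under the mild edge-count hypothesis $|E| \geq Cn\gamma^{-1}$. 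We overcome this by exploiting the shift-invariance of the symmetric SDP: after symmetrization, the entire frame $\{u_0,\ldots,u_{k-1}\}$ of each vertex is generated from $u_0$ via a single global cyclic unitary $\sigma$ satisfying $u_{i+1} = \sigma u_i$ simultaneously for every vertex (as can be verified from the shift-invariance of all pairwise inner products), so the effective per-vertex parameter is just $u_0$ and the net cardinality collapses to $\exp(O(n\log k))$. A martingale argument analogous to the one used in the proof of Theorem~\ref{thm:longedges} handles the correlations introduced by the joint discretization, and choosing $C$ sufficiently large ensures $\gamma \Delta \gg 1$ so that the concentration bound dominates the net cardinality, completing the union bound with probability $1-o(1)$.
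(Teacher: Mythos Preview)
Your per-edge analysis is fine: for a fixed shift-invariant SDP solution, the event ``$(u,v)$ is $1/32$-short'' is exactly $\langle u_0, v_{s_{uv}}\rangle \ge 31/32$, Parseval forces at most one admissible shift, and independence across edges of $E_\varepsilon$ gives the Chernoff bound $\exp(-\Omega(\gamma|E|\log k))$. The existence of a single global cyclic unitary $\sigma$ with $u_{i+1}=\sigma u_i$ for all $u$ is also correct, as the shift-invariance of all inner products makes the map $u_i\mapsto u_{i+1}$ a well-defined isometry on the span.

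The gap is in the union bound. You claim the net cardinality collapses to $\exp(O(n\log k))$ because after Johnson--Lindenstrauss projection to $\bbR^D$ with $D=O(\log k)$ ``the effective per-vertex parameter is just $u_0$''. But the shift structure does not survive such a projection: the original frame $\{v_0,\dots,v_{k-1}\}$ at any vertex is orthonormal, and $\bbR^{O(\log k)}$ simply cannot contain $k$ near-orthonormal vectors, so no unitary $\hat\sigma$ on the target space can play the role of $\sigma$. Consequently, specifying only $\phi(u_0)$ per vertex (together with any putative $\hat\sigma$ on $\bbR^D$) does not determine which shifts $s$ make $\|u_0-v_s\|$ small; the witness does not control the event you want to bound. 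Worse, for any $\hat\sigma$ on $\bbR^{O(\log k)}$ the orbit $\{\hat\sigma^s\phi(v_0)\}_{s\in[k]}$ consists of $k$ points on an $O(\log k)$-dimensional sphere, so by pigeonhole there will typically be \emph{many} shifts $s$ with $\|\phi(u_0)-\hat\sigma^s\phi(v_0)\|$ small, destroying the $1/k$ per-edge probability you need. Discretizing all $nk$ vectors instead gives $\exp(\Theta(nk\log k))$ witnesses, which the hypothesis $|E|\ge Cn\gamma^{-1}$ cannot absorb.

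The paper's proof takes a different route specifically to dodge this obstruction. It writes the random shift $\tilde s_{uv}$ in base $T$ (a large absolute constant) with $P=\Theta(\log\log k)$ digits, and for each digit position $p$ forms $T$ aggregated unit vectors $\Psi'_u(M_{p,d})$ by summing $u_i$ over the label blocks $L(M_{p,d})$. Shortness of an edge forces, for every $p$, a near-equality $\|\Psi'_u(M_{p,0})-\Psi'_v(M_{p,D_p(\tilde s_{uv})})\|^2\le 1/8$. There are only $O(nPT)=O(n\log\log k)$ aggregated vectors, each landing in a \emph{constant-size} net, so the witness count is $\exp(O(n\log\log k))$; and since the $P$ digits of a uniform shift are independent, the per-edge probability that a fixed witness is ``good'' is at most $R^{-\Omega(P)}\le(\log k)^{-2}$, yielding concentration $\exp(-\Omega(\gamma|E|\log\log k))$. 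Both the witness count and the concentration are weaker than what you aimed for, but they match, and that is precisely what $|E|\ge Cn\gamma^{-1}$ supports.
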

\begin{proof}
As mentioned in the beginning of this section the result we would like to prove
holds not only for cyclic groups but for arbitrary Abelian groups. In fact,
the proof is simpler if the group can be decomposed into a direct sum of many
cyclic groups of small size. So, in some sense, our plan is to represent the
cyclic group as a ``pseudo-direct sum'' of such groups.
Before proceeding to the proof, we note that all
transformations described below are only required for the proof and are not
performed by the approximation algorithm.

Let $S$ be a circle (or a one dimensional torus). We identify $S$ with the segment $[0,1]$ with
``glued'' ends 0 and 1. Define an embedding $\theta: [k]\to S$ as follows $\theta(i) = i/k$. We denote
the distance between two adjacent labels by $\tau = 1/k$. Below,
all ``$+$'' and ``$-$'' operations on elements of $S$ are ``modulo 1'' and all operations
on $[k]$ (and later $[T]$) are modulo $k$ (and $T$ respectively).

Recall, that in the ``adversarial edges, random constraints'' model of Linear Unique Games,
the adversary first chooses a set of edges $E_{\varepsilon}$, then the ``nature'' randomly
picks a shift $s_{uv}\in [k]$ for every $(u,v)\in E_{\varepsilon}$. We assume that the
nature picks ${s}_{uv}$ in the following way: it chooses a random $\tilde{s}_{uv}\in S$,
and then defines $s_{uv} = \rounddown{\tilde{s}_{uv}k}$. Then, clearly, $s_{uv}$ is
distributed uniformly in $[k]$.

If $M$ is a subset of $S$, then $M+s$ denotes the shift of $M$ by $s$.
We denote by $L(M)$ the set of labels covered by $M$:
$$L(M)= \{i\in [k]: \theta(i) \in M\}.$$
We let $\Psi_u(M) = \sum_{i\in L(M)} u_i$, and $\Psi'_u(M) = \Psi_u(M)/\|\Psi_u(M)\|$.

Fix $T = R^2$ (where $R\in \bbN$) to be a sufficiently large absolute constant (not depending on $k$ or
any other parameters).
We represent each $\tilde{s}\in S$ as
$$\tilde{s} = \sum_{p=1}^\infty \tilde{s}_p/T^p,$$
where every $\tilde{s}_p$ belongs to $[T]$, in other words,
$0.\tilde{s}_1\tilde{s}_2\tilde{s}_3\dots$ is a representation
of $\tilde{s}$ in the numeral system with base $T$. (This representation is not unique
for a set of measure $0$. If it is not unique, we pick the representation that ends in zeros.)
We define $D_p(\tilde{s}) = \tilde{s}_p$. Note, that if $\tilde{s}$ is uniformly distributed
random variable in $S$, then $D_p(\tilde{s})$ is uniformly distributed in $[T]$ and
all $D_p(\tilde s)$ are independent for different $p$. We are interested in the first
$P=\roundup{2\ln\ln k}$ digits of $\tilde{s}_{uv}$. We have chosen $P$ so that $T^{-P} = \log^{-\Theta(1)} k > \tau R \equiv R/k$, but $e^P\geq \ln^2 k$.

For every integer $p\in \{1,\dots, P\}$ and $d\in [T]$ define
$$M_{p,d} = \{\tilde{s}\in S: D_p(\tilde{s}) \in [d, d+R-1] \},$$
that is, $M_{p,d}$ is the set of numbers in $[0,1]$, whose $p$-th digit is in the range $[d, d+R-1]$. Note, that the set $\{\tilde{s}\in S: D_p(\tilde{s}) = d\}$ is a union
of segments of length $T^{-P} > \tau R\equiv R/k$ (here and below, all segments are left-closed and right-open).

\begin{lemma}\label{lem:segmunion}
Let $M\subset S$ be a union of segments of length at least $R\tau$ and let $\tilde{s}\in[0,1]$. Then,
$$(1-1/R) \;\mu (M)/\tau \leq |L(M)| \leq (1+1/R) \; \mu (M)/\tau.$$
and
$$|L(M)\symdiff L(M+\tilde{s})| \leq \min(|L(M)|, |L(M+\tilde{s})|) /R \times (2\tilde{s}/\tau + 2),$$
here $\symdiff$ denotes the symmetric difference of two sets; and $\mu$ is the
uniform measure on $[0,1]$.
\end{lemma}
\begin{proof}
Let $M$ be the disjoint union of segments $I\in \calI$ of length at least $R\tau$ each. Consider one of the segments $I\subset M$. This segment covers at least $\rounddown{\mu (I)/\tau}\geq (\mu (I)/\tau - 1)$
and at most $\roundup{\mu (I)/\tau} \leq (\mu (I)/\tau +1)$ points $\theta(i)$. In other words,
$\mu (I)/\tau  - 1 \leq |L(I)| \leq \mu (I)/\tau +1$. Observe that $1/R\cdot \mu(I)/\tau \geq 1$,
because $\mu(I)\geq R\tau$, thus $$(1-1/R)\mu (I)/\tau \leq |L(I)| \leq (1 + 1/R) \mu (I)/\tau.$$
Using equalities $|L(M)| = \sum_{I\in \calI} |L(I)|$  and $\mu(M) = \sum_{I\in \calI} \mu(I)$, we
get $(1-1/R) \mu (M)/\tau \leq  |L(M)| \leq (1+1/R) \mu (M)/\tau$.

To prove the second inequality, observe that
$$L(M)\symdiff L(M+\tilde{s})\subset \bigcup_{I\in \calI} (L(I)\symdiff L(I+\tilde{s})).$$
 For each $I\in \calI$,
$L(I)\symdiff L(I+\tilde{s}) = L(I\symdiff (I+\tilde{s}))$. The
set $I\symdiff (I+\tilde{s})$ is the union of two segments of length at most $\tilde{s}$,
thus $I\symdiff (I+\tilde{s})$ covers at most $2(\tilde{s}/\tau + 1)$ points $\theta(i)$ ($i\in [k]$).
Every interval covers at least $R$ points. Thus, the size of the family $\calI$ is at most $L(I)/R$.
Therefore, $|L(M)\symdiff L(M+\tilde{s})| \leq |L(M)|/R \times (2\tilde{s}/\tau + 2)$. The same
argument shows $|L(M)\symdiff L(M+\tilde{s})| \leq |L(M+\tilde{s})|/R \times (2\tilde{s}/\tau + 2)$.
\end{proof}

\begin{lemma}\label{lem:C.3}
Let $s_{uv} = \rounddown{k\tilde{s}_{uv}}$ ($s_{uv}\in [k]$, $\tilde{s}_{uv}\in S$), $p\in \{1,\dots, P\}$ and let $d=D_p(\tilde{s}_{uv})$ be the $p$-th digit of
$\tilde{s}_{uv}$. Then
$|L(M_{p,0})|, |L(M_{p,d})|\in [(1-1/R)/(R \tau), (1+1/R)/(R \tau)]$ and
$$|L(M_{p,0})\symdiff (L(M_{p,d}) - s_{uv})|\leq
8\min(|L(M_{p,0})|,|L(M_{p,d})|)/R .$$
\end{lemma}
\begin{proof}
We have $\mu(M_{p,d}) = R/T = 1/R$, thus by Lemma~\ref{lem:segmunion},
$|L(M_{p,0})|\in [(1-1/R)/(R \tau), (1+1/R)/(R \tau)]$.
Write,
\begin{eqnarray*}
|L(M_{p,0})\symdiff (L(M_{p,d}) - s_{uv})| &=& |(L(M_{p,0}) + s_{uv}) \symdiff L(M_{p,d})| =
|L(M_{p,0} + \theta(s_{uv})) \symdiff L(M_{p,d})|\\
&\leq&
|L(M_{p,0} + \theta(s_{uv})) \symdiff L(M_{p,0} + \tilde{s}_{uv})|+
|L(M_{p,0} + \tilde{s}_{uv}) \symdiff L(M_{p,d})|.
\end{eqnarray*}
Since $\tilde{s}_{uv} - \theta(s_{uv})\in [0,\tau]$, by Lemma~\ref{lem:segmunion},
$$|L(M_{p,0} + \theta(s_{uv})) \symdiff L(M_{p,0} + \tilde{s}_{uv})|\leq  |L(M_{p,0} + \theta(s_{uv}))|/R\cdot (2(\tilde{s}_{uv} - \theta(s_{uv}))/\tau + 2)\leq 4 |L(M_{p,0})|/R.$$
The $p$-th digit of $\tilde{s}_{uv}$ is $D_p(\tilde{s}_{uv}) = d$. Hence, the $p$-th digit of every number
$\tilde{s}$ in $M_{p,0} + \tilde{s}_{uv}$ is in the range $[d, (R-1) + d + 1] = [d, R + d]$. Moreover,
all numbers with $p$-th digit in the interval $[d+1, (R-1) + d]$ are covered by
$M_{p,0} + \tilde{s}_{uv}$. Thus,
$(M_{p,0} + \tilde{s}_{uv})\symdiff M_{p,d} \subset \{\tilde{s}: D_p(\tilde{s})\in \{d,d+R\}\}.$
The measure of the set $\{\tilde{s}: D_p(\tilde{s})\in \{d,d+R\}\}$ is $2/T$. It is
a union of segments of lengths $T^{-p}\geq \tau R$. By Lemma~\ref{lem:segmunion},
$$L((M_{p,0} + \tilde{s}_{uv})\symdiff M_{p,d})\subset L(\{\tilde{s}: D_p(\tilde{s})\in \{d,d+R\}\}) \leq (1+1/R)\cdot 2/(T\tau) \leq 4L(M_{p,0})/R.$$
\end{proof}

\begin{lemma}~\label{lem:sizeM}
Let $s_{uv} = \rounddown{k\tilde{s}_{uv}}$ ($s_{uv}\in [k]$, $\tilde{s}_{uv}\in S$),
$p\in \{1,\dots, P\}$, and let $d=D_p(\tilde{s}_{uv})$ be the $p$-th digit of
$\tilde{s}_{uv}$. Suppose that an edge $(u,v)\in E_{\varepsilon}$ is 1/32--short, then
\begin{equation}\label{eq:psimpd1}
\|\Psi_u(M_{p,0}) - \Psi_v(M_{p,d})\|^2 \leq 1/8 \;
\min (\|\Psi_u(M_{p,0})\|^2, \|\Psi_v(M_{p,D_p(\tilde{s})})\|^2);
\end{equation}
and
\begin{equation}\label{eq:psimpd2}
\|\Psi'_u(M_{p,0}) - \Psi'_v(M_{p,d})\|^2 \leq 1/8.
\end{equation}
\end{lemma}
\begin{proof}
Write,
\begin{eqnarray*}
\|\Psi_u(M_{p,0}) - \Psi_v(M_{p,d})\|^2 &=&
\big\|\sum_{i\in L(M_{p,0})} u_i - \sum_{i\in L(M_{p,d})} v_i\big\|^2
=
\Big\|\sum_{i\in L(M_{p,0}) \cap (L(M_{p,d}) - s_{uv})} (u_i - v_{i+s_{uv}}) + \\
&\phantom{=}&\sum_{i\in L(M_{p,0}) \setminus (L(M_{p,d}) - s_{uv})} u_i -
\sum_{i\in  (L(M_{p,d}) - s_{uv}) \setminus L(M_{p,0})} v_{i+s_{uv}}\Big\|^2.
\end{eqnarray*}
Using $\ell_2^2$ triangle inequalities $\langle u_i - v_{i+s_{uv}}, u_j - v_{j+s_{uv}}\rangle =
-\langle u_i, v_{j+s_{uv}}\rangle - \langle v_{i+s_{uv}}, u_j\rangle \leq 0$, $\langle u_i, -v_{j+s_{uv}}\rangle \leq 0$
(for $i\neq j$) and then
Lemma~\ref{lem:C.3}, we get (for sufficiently large $R$)
\begin{eqnarray*}
\|\Psi_u(M_{p,0}) - \Psi_v(M_{p,d})\|^2 &\leq&
\sum_{i\in L(M_{p,0}) \cap (L(M_{p,d}) - s_{uv})} \|u_i - v_{i+s_{uv}}\|^2 +
|L(M_{p,0}) \symdiff (L(M_{p,d}) - s_{uv})|\\
&\leq& 1/16 \; |L(M_{p,0})|   + 8|L(M_{p,0})|/R \leq 1/8\; |L(M_{p,0})| = \|\Psi_u(M_{p,0})\|^2/8.
\end{eqnarray*}
Similarly,
$\|\Psi_u(M_{p,0}) - \Psi_v(M_{p,d})\|^2 \leq \|\Psi_u(M_{p,d})\|^2/8$.

Inequality~(\ref{eq:psimpd2}) immediately follows from inequality~(\ref{eq:psimpd1}): let $\psi_u = \Psi_u(M_{p,0})$,
$\psi_v = \Psi_v(M_{p,d})$ and assume $\|\psi_u\|\leq \|\psi_v\|$, then
$\|\;\psi_u/\|\psi_u\| - \psi_v / \|\psi_u\|\; \|^2 \leq  1/8$. Vector
$\psi_u/\|\psi_u\|$ has length 1, and vector $\psi_v / \|\psi_u\|$ has length at least 1,
hence $\|\;\psi_u/\|\psi_u\| - \psi_v / \|\psi_v\|\; \|^2 \leq \|\;\psi_u/\|\psi_u\| - \psi_v / \|\psi_u\|\; \|^2 \leq 1/8 $.
\end{proof}

Observe, that vectors $\Psi'_v(M_{p,d'})$, $\Psi'_v(M_{p,d''})$ are orthogonal if $|d''-d'| > R$, and thus $$\|\Psi'_u(M_{p,d'}) - \Psi'_u(M_{p,d''})\|^2 = 2.$$

We now proceed the same way as in the proof of Theorem~\ref{thm:longedges}.
We embed $O(n\log\log k)$ vectors $\Psi'_u(M_{p,d})$
($u\in V$, $p\in P$, $d\in T$) in a net $N$ of size $O(1)$
using a randomized mapping $\varphi$ (see Lemma~\ref{lem:JL2}), so
that for some small absolute constant $\beta$ and every
$u,v\in  V$; $d',d'' \in T$; and $p\in \{1,\dots,P\}$,
\begin{multline*}
\Pr ((1+\beta)^{-1} \|\Psi'_u(M_{p,d'})-\Psi'_v(M_{p,d''})\|^2 - \beta
\leq \|\Phi(u,p,d') - \Phi(v,p,d'')\|^2 \\
\leq (1+\beta)\|\Psi'_u(M_{p,d'})-\Psi'_v(M_{p,d''})\|^2 + \beta) \geq 1 - \beta/T^{2},
\end{multline*}
where $\Phi(u,p,d) = \varphi(\Psi'_u(M_{p,d}))$. We say that a pair
$(u,p)\in V\times \{1,\dots,P\}$ is good if the following inequality holds:
$\|\Phi(u,p,d')- \Phi(u,p,d'') \|^2 \geq 2 - 3\beta$
for all $d',d''\in [T]$ such that $|d'' - d'| > R$. By Lemma~\ref{lem:JL2},
a pair $(u,p)$ is good with probability at least $1-\beta$.
Then, for a fixed $1/32$--short edge $(u,v)\in E_{\varepsilon}$, the expected fraction of
$p$'s for which both pairs $(u,p)$ and $(v,p)$ are good and
\begin{equation}\label{eq:phi18}
\|\Phi(u,p,0) - \Phi(v,p,D_p(\tilde{s}_{uv}))\|^2 \leq 1/4
\end{equation}
is at least $1-3\beta$.

Assume that $\gamma |E| = \gamma\varepsilon^{-1} |E_{\varepsilon}|$ edges in $E_{\varepsilon}$
are $1/32$--short. We say that an edge $(u,v)\in E_{\varepsilon}$ is good with respect to the set
$\{\Phi(u',p,d)_{u,p,d}\}$ if the following statement holds:
 ``for at least $(1-6\beta)$ fraction of $p$'s in $\{1,\dots,P\}$, the pairs
$(u,p)$ and $(v,p)$ are good, and inequality~(\ref{eq:phi18}) holds''. By the Markov inequality,
there exists a realization of random variables $\Phi(u,p,d)$ (random with respect to a random
embedding in the net $N$) such that for at least $\gamma/2$ fraction of edges $(u,v)\in E_{\varepsilon}$
the previous statement holds.

Thus we have shown that for every linear unique game
with $\gamma|E|$   $1/32$--short edges there always exists
a witness---a collection of vectors $\{\Phi(u,p,d)\}\subset N$, such that at least $\gamma|E|/2$
edges in $E_{\varepsilon}$ are good with respect to this collection (Remark: so far we have not
used that the instance is semi-random). Now, we will prove
that for a fixed witness, the probability that $\gamma|E|/2$
 edges in $E_{\varepsilon}$ is good in a semi-random unique game is exponentially
small.

Fix an edge $(u,v)\in E_{\varepsilon}$ and compute the probability that it is good with respect to
a fixed witness $\{\Phi(u,p,d)\}\subset N$. The probability that
\begin{equation}\label{Phi14}
\|\Phi(u,p,0) - \Phi(v,p,D_p(\tilde{s}_{uv}))\|^2 \leq 1/4
\end{equation}
for a random $(\tilde{s}_{uv})$ is at most $1/R$ if pairs $(u,p)$ and $(v,p)$ are good,
since among every $R$ values $d \in\{d_0,d_0+R, d_0 + 2R,\cdots\}$ there is at most one $d$ satisfying
$$\|\Phi(u,p,0) - \Phi(v,p,d)\|^2 \leq 1/4.$$
(Recall, that $\|\Phi(v,p,d') - \Phi(v,p,d'')\|^2 \geq 2 - 3\beta$ if $|d'-d''| > R$).
By the Chernoff bound the probability that for $(1-6\beta)$ fraction of $p$'s
the inequality~(\ref{Phi14}) is satisfied and $(u,p)$, $(v,p)$ are good
is at most
$$e^{-P\ln ((1-6\beta)R)} \leq e^{-P} \leq \ln^{-2}k.$$

Hence, the expected number of good edges in $E_{\varepsilon}$ is at most
$|E_{\varepsilon}|/\ln^2k = \varepsilon|E|/\ln^2k $, and
the probability that $\gamma|E|/2$ edges in $E_{\varepsilon}$ are good is at most
$$\exp\left(-\gamma|E|/2 \cdot \ln \bigl(\gamma\varepsilon^{-1}\ln^2 k /2\bigr)\right) \leq
\exp(-Cn/2 \cdot \ln (1/2\,\ln k)).$$

The total number of possible witnesses $\{\Phi(u,p,d)\}\subset N$ is
$\exp(O(n \log\log k))$. So by the union bound (for sufficiently large
absolute constant $C$) with probability $1-\exp(-n)=1-o(1)$,
less  than $\gamma |E|$ edges in $E_{\varepsilon}$ are $1/32$--short.
\end{proof}

\section{Random Initial Constraints}
In this section, we consider the model, in which the initial set of constraints is chosen at random and
other steps are controlled by the adversary.
Specifically, in this model the adversary chooses the constraint graph $G = (V, E)$ and a ``planted solution''
$\set{x_u}$. Then for every edge $(u,v)\in E$, she randomly chooses a permutation (constraint)
$\pi_{uv}$ such that $\pi_{uv}(x_u) = x_v$ (each of $(k-1)!$ possible permutations is chosen with the same probability $1/(k-1)!$;
choices for different edges are independent).
Then the adversary chooses an arbitrary set $E_\eps$ of edges of size at most $\eps |E|$ and (adversarially) changes the corresponding
constraints: replaces constraint $\pi_{uv}$ with a constraint $\pi_{uv}'$ for $(u,v)\in E_\eps$.
Note that the obtained semi-random instance is $1-\eps$ satisfiable since the ``planted solution'' $x_u$ satisfies constraints for
edges in $E\setminus E_\eps$.
The analysis of this model is much simpler than the analysis of the other two models that we study.
\begin{theorem}\label{thm:main3}
There exists a polynomial-time algorithm that given
$k \in \bbN$ ($k\geq k_0$), $\varepsilon \in (0,1)$, $\eta \in (c\log k/\sqrt{k},1)$
and a semi-random instance of unique games from the ``random initial instance''
model on graph $G=(V,E)$ with at least $C \eta^{-1} n\log k$ edges
finds a solution of value $1-O(\varepsilon + \eta/\log k)$
with probability $1-o(1)$ (where $c$, $C$ and $k_0$ are some absolute constants).
\end{theorem}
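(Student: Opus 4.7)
The plan is to solve the standard SDP relaxation and round by an argmax rule, exploiting the fact that, because the bulk $(1-\eps)|E|$ of the constraints are independent uniform permutations consistent with the planted assignment, any near-optimal SDP solution must concentrate its mass on the planted layer. First I solve the standard SDP of Section~\ref{sec:prelim}. The planted assignment $\{x_u\}$ yields the feasible integral solution $u_{x_u}=e$, $u_i=0$ for $i\neq x_u$ (where $e$ is any fixed unit vector), which satisfies every uncorrupted constraint; its SDP value is therefore at most $\eps$, so the optimum SDP value is at most $\eps$ as well.

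The key technical claim is that, with probability $1-o(1)$ over the random uncorrupted constraints, every feasible SDP solution $\{u_i\}$ of value at most $\eps$ satisfies
$$\sum_{u\in V}\bigl(1-\|u_{x_u}\|^2\bigr)\;\leq\;O\!\bigl(n\,(\eps+\eta/\log k)\bigr),$$
that is, almost all SDP mass lies on the planted layer. The intuition is analogous to Theorem~\ref{thm:longedges}: if a non-negligible fraction of vertices placed significant mass on off-planted labels, then on a typical uncorrupted edge $(u,v)$ the random permutation $\pi_{uv}$, which acts as a uniformly random matching between $[k]\setminus\{x_u\}$ and $[k]\setminus\{x_v\}$, would fail to align the off-planted vectors, and averaged over the $(1-\eps)|E|$ random edges this would push the SDP objective above $\eps$. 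Formally, I compare $\{u_i\}$ with a shifted solution that moves all off-planted mass into a single new direction (say $\tilde u_{x_u}=\|u_{x_u}\|\,e+\sqrt{1-\|u_{x_u}\|^2}\,e'$, $\tilde u_i=0$ otherwise, for an auxiliary unit vector $e'$ orthogonal to everything) and argue that on most uncorrupted edges the objective strictly decreases; optimality of $\{u_i\}$ then upper bounds the total off-planted mass.

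To union-bound over all candidate SDP solutions, I apply a Johnson--Lindenstrauss style dimension reduction (Lemma~\ref{lem:JL2}) to replace $\{u_i\}_{u,i}$ by vectors in a net of size $\exp(O(\log k))$ while nearly preserving pairwise squared distances, then apply a Hoeffding inequality (or a martingale variant to handle the mild dependence coming from the permutation structure) to the sum over the $(1-\eps)|E|$ independent random constraints. The hypothesis $|E|\geq C\eta^{-1}n\log k$ is calibrated so that this concentration exponent dominates the $\exp(O(n\log k))$ entropy of the net. Once the key claim is in hand, Markov's inequality implies that all but an $O(\eps+\eta/\log k)$ fraction of vertices $u$ satisfy $\|u_{x_u}\|^2\geq 1/2$; since $\sum_i\|u_i\|^2=1$, for those vertices $\arg\max_i\|u_i\|^2=x_u$, so rounding by argmax recovers the planted label on a $1-O(\eps+\eta/\log k)$ fraction of vertices, and a union bound over endpoints and the $\eps$ corrupted edges gives an assignment satisfying a $1-O(\eps+\eta/\log k)$ fraction of constraints. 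The main obstacle is the analog of Lemma~\ref{lem:onelay}: proving the swap-and-reduce inequality works simultaneously for all discretized near-optimal SDP solutions, making essential use of the stronger randomness of this model (where most constraints, not just the corrupted ones, are random) to extract the tighter $\eta/\log k$ rounding loss rather than the $\eta$ loss of the adversarial-edges model.
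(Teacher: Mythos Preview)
Your high-level algorithm coincides with the paper's: solve the standard SDP (value $\le\eps$ via the planted integral solution) and round each vertex to the label of largest mass. The analysis, however, has two genuine gaps.

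\textbf{From vertices to edges.} Even granting your key claim $\sum_u(1-\|u_{x_u}\|^2)=O\bigl(n(\eps+\eta/\log k)\bigr)$, Markov only bounds the \emph{number of vertices} with $\|u_{x_u}\|^2<1/2$, not the number of edges touching them. Since the adversary designs $G$, those few bad vertices can carry a constant fraction of the total degree (indeed, the claim itself is false if $G$ has many near-isolated vertices), so ``union bound over endpoints'' does not yield an $O(\eps+\eta/\log k)$ edge bound. The paper never passes through an unweighted vertex count: it sets $S=\{u:\|u_0\|^2\ge t\}$ for a threshold $t\in(1/2,3/4)$ and separately bounds $|E(S,V\setminus S)|\le 8\eps|E|$ (via $\sum_{(u,v)\in E}\bigl|\|u_0\|^2-\|v_0\|^2\bigr|\le 2\eps|E|$, which follows from the SDP value) and $|E(V\setminus S)|$.

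\textbf{Controlling the off-planted mass.} Your swap $\tilde u_{x_u}=\|u_{x_u}\|e+\sqrt{1-\|u_{x_u}\|^2}\,e'$, $\tilde u_i=0$ otherwise, does not give the comparison you need: on a corrupted edge with $\pi'_{uv}(x_u)\neq x_v$ the swapped edge-cost is maximal, and on uncorrupted edges the swapped cost $(\|u_{x_u}\|-\|v_{x_v}\|)^2+(\sqrt{1-\|u_{x_u}\|^2}-\sqrt{1-\|v_{x_v}\|^2})^2$ can be of the same order as the original; nothing in the inequality ``swap $\ge$ optimum'' isolates $\sum_u(1-\|u_{x_u}\|^2)$. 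The paper does not swap here. It peels off the planted layer and looks at the residual game on labels $[k]\setminus\{x_u\}$ with constraints $\pi_{uv}^*=\pi_{uv}|_{[k]\setminus\{x_u\}}$: conditioned on $\pi_{uv}(x_u)=x_v$, each $\pi_{uv}^*$ is an independent uniform permutation of $k-1$ letters, so \emph{every} edge of the residual game carries a random constraint --- exactly the hypothesis of Theorem~\ref{thm:longedges} with $\eps=1$. The off-planted vectors $\{u_i\}_{i\neq x_u}$ (unchanged on $V\setminus S$, replaced by a fixed unit vector on $S$) form a relaxed SDP solution with $\sum_i\|u_i^*\|^2\in[1/4,1]$, and the slightly adapted theorem (Lemma~\ref{lem:longedgesmod}) says at most $O(\eta/\log k)|E|$ edges are $1/64$-short in this residual game. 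Every $1/64$-long edge inside $V\setminus S$ contributes at least $1/64$ to the original SDP objective, which is $\le\eps|E|$, so $|E(V\setminus S)|=O(\eps+\eta/\log k)|E|$. The JL-plus-concentration union bound you sketch is precisely the machinery of Theorem~\ref{thm:longedges}; the point you are missing is that it is applied to the residual constraints (all random) rather than to a swapped solution, and its output is an edge bound, not a vertex bound.
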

\begin{proof}
We solve the standard SDP relaxation for the problem. Then we use a very
simple rounding procedure. For every vertex $u$, if $\|u_i\|^2 > 1/2$ for some label $i$,
we label $u$ with $i$; otherwise, we label $u$ with an arbitrary label
(since for every two labels $i_1\neq i_2$, $\|u_{i_1}\|^2 + \|u_{i_2}\|^2 \leq 1$,
we do not label any vertex with two labels).

We now show that our labeling satisfies a $1-O(\eps + \eta/\log k)$ fraction of constraints w.h.p.
Without loss of generality, we assume that the planted solution is $x_u = 0$ for every $u\in V$.
For every $t \in (1/2; 3/4)$, let $S_t = \set{u: \|u_0\|^2 \geq t}$.
Let $t_0$ be the value of $t$ that minimizes the size of the cut between
$S_t$ and $V\setminus S_{t}$; let $S = S_{t_0}$.
Note that if $u\in S$ then we label vertex $u$ with $0$.
Therefore, our labeling satisfies all constraints $\pi_{uv}$ for edges $(u,v)$ within $S$
(but not necessarily constraints $\pi_{uv}'$). We conservatively assume that constraints for all
edges from $S$ to $V\setminus S$ and edges within $V\setminus S$ are not satisfied.
We now estimate their number.
First, we bound the number of edges leaving $S$.
Note that since the unique game instance is $1-\eps$ satisfiable the cost
of the SDP solution is at most $\eps$.
In particular,
$$\frac{1}{2}\sum_{(u,v)\in E} \Bigl|\|u_0\|^2 - \|v_0\|^2\Bigr| \leq \frac{1}{2}\sum_{(u,v)\in E} \|u_0 - v_0\|^2 \leq \eps.$$
On the other hand, if we choose $t$ uniformly at random from $(1/2; 3/4)$, then the
probability that $u\in S_t$ and $v\notin S_t$ or $u\notin S_t$ and $v\in S_t$ is at most
$4\Bigl|\|u_0\|^2 - \|v_0\|^2\Bigr|$ for every $(u,v)\in E$. Therefore, the expected
size of the cut between $S_t$ and $V\setminus S_t$ is at most $8 \eps|E|$.
Hence the size of the cut between $S$ and $V\setminus S$ is at most $8 \eps|E|$.

Now we estimate the number of edges within $V\setminus S$.
We consider a new instance of Unique Games on $G$ with the label
set $\set{1,\dots, k-1}$ and constraints $\pi_{uv}^* = \pi_{uv}|_{\set{1,\dots, k-1}}$
(the restriction of $\pi_{uv}$ to the set $\set{1,\dots, k-1}$).
Note that each $\pi_{uv}^*$ is a permutation of $\set{1,\dots, k-1}$ since
$\pi_{uv}(0) = 0$. Moreover, each $\pi_{uv}^*$ is a random permutation
uniformly chosen among all permutations on $\set{1,\dots, k-1}$.
For each vertex $u$ and label $i\in\{1,\dots, k-1\}$, we define a vector $u_i^*$ as follows.
If $u\notin S$, we let $u_i^* = u_i$. Otherwise, we let $u_1^* = e$ and $u_i^* = 0$ for $i > 1$,
where $e$ is a fixed unit vector orthogonal to all vectors $v_j$ in the SDP solution.

We say that $\{u_i^*\}$ is a relaxed SDP solution if it satisfies all SDP conditions except possibly
for the condition that $\sum_{i} \|u_i^*\|^2 = 1$ for every vertex $u$.
We require instead that $1/4 \leq \sum_{i} \|u_i^*\|^2  \leq 1$.
Note that the set of vectors $\{u_i^*\}$ is a relaxed SDP solution since
for every $u\notin S$, $\sum_{i=1}^{k-1} \|u_i^*\|^2 = 1 - \|u_0\|^2 \geq 1 - t_0 \geq 1/4$;
for every $u\in S$, $\sum_{i=1}^{k-1} \|u_i^*\|^2 = \|e\|^2 = 1$.

We now use a slightly modified version of Theorem~\ref{thm:longedges}.
\begin{lemma}\label{lem:longedgesmod}
Consider a unique game on a graph $G=(V,E)$ with at least $C \eta^{-1} n\log k$ edges
with random set of constraints $\pi_{uv}^*$, where $\eta \in (c\log k/\sqrt{k}, 1)$
(where $c$ and $C$ are some absolute constants).
Let $\{u_i^*\}$ be a relaxed SDP solution. Then, there are at most $O(\eta/\log k)|E|$ $1/64$--short edges
with probability $1-o(1)$.
\end{lemma}
The proof of the lemma almost exactly repeats the proof of Theorem~\ref{thm:longedges} for $\eps = 1$
(we only need to change inequalities (\ref{eq:f-sum1}) and (\ref{eq:f-sum2}) slightly).

We apply this lemma to the solution $\{u_i^*\}$. We get that there are at most  $O(\eta/\log k)|E|$
$1/64$--short edges. In particular, there are at most $O(\eta/\log k)|E|$ $1/64$--short edges in
$E(V\setminus S)$. Thus
\begin{align*}
\frac{1}{2}\sum_{(u,v) \in E}\sum_{i=1}^{k-1} \|u_i - v_{\pi_{uv}(i)}\|^2 &\geq
\frac{1}{2}\sum_{(u,v) \in E(V\setminus S)}\sum_{i=1}^{k-1} \|u_i - v_{\pi_{uv}(i)}\|^2 = \frac{1}{2}\sum_{(u,v) \in E(V\setminus S)}\sum_{i=1}^{k-1} \|u_i^* - v_{\pi^*_{uv}(i)}^*\|^2 \\
&\geq \frac{|E(V\setminus S)|}{64} - O(\eta/\log k)|E|.
\end{align*}
However, the left hand side is at most $\eps |E|$. Therefore, $E(V\setminus S) = O(\eps + \eta/\log k)|E|$.

We conclude that the solution that our algorithm finds satisfies a $1 - O(\eps + \eta/\log k)$ fraction of
constraints $\pi_{uv}$. Since there are at most $\eps|E|$ corrupted constraints, the solution also
satisfies a $1 - O(\eps + \eta/\log k) - \eps = 1 - O(\eps + \eta/\log k)$ fraction of corrupted constraints.
\end{proof}

\section{Distinguishing Between Semi-Random Unsatisfiable Games and Almost Satisfiable Games}
In this section, we study the following question, Is it possible to distinguish between $(1-\eps)$ satisfiable
semi-random games and $(1- \delta)$ satisfiable (non-random) games if $\delta \ll \eps$?
This question is interesting only in the model where the corrupted constraints are chosen at random (i.e. step 4 is random),
since in the other two semi-random models (when the initial constraints are random, and
when the set of corrupted edges $E_\eps$ is random), the semi-random instance can be $1-\delta$
satisfiable, therefore, the answer is trivially negative.

Specifically, we consider the following model. The adversary chooses a constraint graph $G$ and
a set of constraints $\pi_{uv}$. We do not require that this instance is completely satisfiable.
Then she chooses a set of edges $E_{\eps}$ of size $\eps|E|$. She replaces
constraints for edges in $E_{\eps}$ with \textit{random} constraints (each constraint is chosen uniformly
at random among all $k!$ possible constraints).

We show that such semi-random instance can be distinguished w.h.p. from a $(1-\delta)$ satisfiable instance
if $\delta < c \eps$ (where $c$ is an absolute constant) if $|E| \geq C n \max(\eps^{-1}, \log k)$.
To this end, we consider the standard SDP relaxation for Unique Games.
We prove that the SDP value of a semi-random instance is at least $c\eps$;
whereas, of course, the SDP value of a $(1-\delta)$ satisfiable instance is at most $\delta$.

\begin{theorem}
Let $k \in \bbN$ ($k\geq k_0$) and $\varepsilon \in (0,1]$. Consider a graph $G$ with at least
$Cn\max(\varepsilon^{-1},\log k)$ edges, and a semi-random unique games instance $\cal I$
on $G$ with $\eps |E|$ randomly corrupted constraints ($k_0$ and $C$ are absolute constants).
Then the SDP value of $\cal I$ is at least $\varepsilon/32$ with probability $1-o(1)$.
\end{theorem}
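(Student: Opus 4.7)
The plan is to bound the SDP objective from below by analyzing the contribution of the randomly corrupted edges for any fixed SDP solution, and then to handle the data-dependence of the actual optimizer through dimension reduction and a union bound. The core calculation is a per-edge identity: using $\sum_i\|u_i\|^2 = 1 = \sum_j\|v_j\|^2$, I can write
$$g_{uv}(\pi) := \frac{1}{2}\sum_i\|u_i - v_{\pi(i)}\|^2 = 1 - \sum_i\langle u_i, v_{\pi(i)}\rangle.$$
Orthogonality of $\{u_i\}$ and $\{v_j\}$ gives $\|\sum_i u_i\| = \|\sum_j v_j\| = 1$, and combined with the non-negativity of the individual inner products one gets $\sum_i\langle u_i, v_{\pi(i)}\rangle \in [0,1]$. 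Averaging over a uniformly random permutation $\pi$,
$$\E_\pi \sum_i\langle u_i, v_{\pi(i)}\rangle = \frac{1}{k}\Bigl\langle \sum_i u_i,\sum_j v_j\Bigr\rangle \leq \frac{1}{k},$$
so $\E_\pi g_{uv}(\pi) \geq 1 - 1/k \geq 1/2$ for $k \geq k_0$.

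For any \emph{fixed} SDP solution, the random variables $\{g_{uv}(\pi_{uv})\}_{(u,v)\in E_\eps}$ are independent and bounded in $[0,1]$ with mean at least $1/2$, since the corrupted permutations are drawn independently. Hoeffding's inequality then gives
$$\Pr\Bigl[\sum_{(u,v)\in E_\eps} g_{uv}(\pi_{uv}) \leq |E_\eps|/4\Bigr] \leq \exp(-c\eps|E|)$$
for an absolute constant $c$. Since the SDP objective equals $|E|^{-1}\sum_{(u,v)\in E}g_{uv}(\pi_{uv})$, this fixed solution has objective at least $\eps/4$ with probability $1 - \exp(-c\eps|E|)$, comfortably above the target $\eps/32$.

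The real obstacle is that the actual optimizer depends on the realized $\{\pi_{uv}\}$, so one must take a uniform bound over a family of potential solutions. Following the template of Theorem~\ref{thm:longedges}, I apply the Johnson--Lindenstrauss variant (Lemma~\ref{lem:JL2}) to embed all SDP vectors into a low-dimensional net, and I group vertex--label pairs by the norms of their vectors to produce a family of representative witnesses. The function $g_{uv}$ changes by only $O(\beta)$ under $\beta$-perturbations of the vectors, so a witness in the net violating the bound $\eps/32$ approximates every SDP solution violating $\eps/4$, and the slack between these two thresholds absorbs the discretization error. A careful accounting along the lines of Theorem~\ref{thm:longedges} should bound the number of witnesses by $\exp(O(n))$, and the hypothesis $|E| \geq Cn\max(\eps^{-1},\log k)$ yields $\eps|E| \geq Cn$, so the per-witness failure probability $\exp(-c\eps|E|) \leq \exp(-cCn)$ beats the witness count and the union bound closes.

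The hardest part is this last step: constructing a discretization of size only $\exp(O(n))$ despite describing $kn$ unit vectors. This requires exploiting both the orthogonality constraints at each vertex (which reduces the per-vertex degrees of freedom) and the $\ell_2^2$ triangle inequalities across the instance, precisely the structural facts that drove the witness count in the proof of Theorem~\ref{thm:longedges}. Once the witness count is under control, the linear-in-$n$ budget from $\eps|E| \geq Cn$ is exactly what allows the bound $|E| \geq Cn/\eps$ to suffice even when $\eps$ is small.
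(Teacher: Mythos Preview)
Your per-edge calculation is clean and correct: for any feasible SDP solution and a uniformly random permutation $\pi$, $\E_\pi g_{uv}(\pi)\geq 1-1/k$, and Hoeffding over the independent corrupted edges gives failure probability $\exp(-c\eps|E|)$ for a \emph{fixed} solution. But the union-bound step does not close as written, and this is not merely a matter of ``careful accounting.'' The quantity $g_{uv}(\pi)=1-\sum_i\langle u_i,v_{\pi(i)}\rangle$ depends on \emph{all} $2k$ vectors at the endpoints, so discretizing it requires specifying (up to the net) $nk$ vectors; even after Johnson--Lindenstrauss this yields $\exp(\Omega(nk))$ witnesses, whereas your failure exponent is only $c\eps|E|\geq cCn$. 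The hypothesis $|E|\geq Cn\max(\eps^{-1},\log k)$ gives you a budget of order $n$ (or at best $n\eps\log k$), not $nk$, so the union bound fails by a factor of $k$ in the exponent.

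The level-set trick from Theorem~\ref{thm:longedges} does not rescue this directly: that argument does \emph{not} preserve $g_{uv}$ under the passage to a single level $V_{t,r}$ and normalization. What it preserves is the indicator $f_{1/4}$ on a single level, and crucially its failure exponent contains the additional term $\gamma|V_{t,r}|\Delta\log k$, which is what pays for the $|V_{t,r}|\log k$ factor in the witness count. Your $g_{uv}$-based exponent has no such $W$-dependent term, so the matching that makes the union bound close in Theorem~\ref{thm:longedges} is absent here.

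The paper's own proof sidesteps all of this by simply invoking Theorem~\ref{thm:longedges} with $\gamma=\min(\eps/2,1/(2\log k))$: w.h.p.\ at least $|E_\eps|/2$ corrupted edges are $1/16$--long, and those alone contribute $\tfrac{1}{16}\cdot\tfrac{|E_\eps|}{2}=\eps|E|/32$ to the SDP objective. If you want an independent argument, you would need either a net construction of size $\exp(O(n))$ that approximately controls $g_{uv}$ (which seems to require new ideas), or to rephrase your target so that the witness complexity and the failure exponent are coupled as in Theorem~\ref{thm:longedges} --- at which point you are essentially reproving that theorem.
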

\begin{proof}
We apply Theorem~\ref{thm:longedges} to our instance of Unique Games, with $\gamma = \min(\varepsilon/2, 1/(2\log k))$.
We get that at least half of all edges in $E_{\eps}$ are $1/16$--long w.h.p.
The contribution of these edges to the sum $\frac{1}{2}\sum_{(u,v)}\sum_i \|u_i - v_{\pi_{uv}(i)}\|^2$
in the SDP objective function is at least $1/16 \times (|E_{\eps}|/2) = \eps|E|/32$.
Therefore, the SDP value is at least $\eps/32$.
\end{proof}


\end{document}